\documentclass[pdflatex,sn-mathphys-num]{sn-jnl}

\usepackage{graphicx}%
\usepackage{multirow}%
\usepackage{amsmath,amssymb,amsfonts}%
\usepackage{amsthm}%
\usepackage{mathrsfs}%
\usepackage[title]{appendix}%
\usepackage{xcolor}%
\usepackage{textcomp}%
\usepackage{manyfoot}%
\usepackage{booktabs}%
\usepackage{algorithm}%
\usepackage{algorithmicx}%
\usepackage{algpseudocode}%
\usepackage{listings}%
\usepackage{amsthm}

\usepackage{graphicx}
\usepackage{dcolumn}
\usepackage{bm}
\usepackage{enumitem}

\usepackage{booktabs}
\usepackage{setspace} 
\usepackage{algorithm}
\usepackage{algpseudocode}
\usepackage{xcolor}
\usepackage{tikz}
\usepackage{circuitikz}
\usetikzlibrary{decorations.pathreplacing}  
\usepackage{multirow,array}

\usepackage{adjustbox,tabularx}
\usepackage{subcaption}   
\usepackage{hyperref}
\hypersetup{
    colorlinks=true,
    linkcolor=blue,
    citecolor=blue,
    filecolor=blue,
    urlcolor=blue
}

\theoremstyle{thmstyleone}%
\newtheorem{theorem}{Theorem}
\newtheorem{proposition}[theorem]{Proposition}%

\theoremstyle{thmstyletwo}%
\newtheorem{remark}{Remark}%
\theoremstyle{thmstylethree}%
\newtheorem{definition}{Definition}%

\def\>{\ensuremath{\rangle}}
\def\<{\ensuremath{\langle}}

\usepackage{dsfont}

\begin{document}

\title[Article Title]{Validating a Koopman-Quantum Hybrid Paradigm for Diagnostic Denoising of Fusion Devices}


\author[1]{\fnm{Tie-Jun} \sur{Wang}}

\author[2]{\fnm{Run-Qing} \sur{Zhang}}

\author[2]{\fnm{Ling} \sur{Qian}}

\author[3]{\fnm{Yun-Tao} \sur{Song}} 

\author*[3]{\fnm{Ting} \sur{Lan}} \email{lanting@ipp.ac.cn}

\author*[3]{\fnm{Hai-Qing} \sur{Liu}} \email{hqliu@ipp.ac.cn}

\author*[4,5]{\fnm{Keren} \sur{Li}} \email{likr@szu.edu.cn}

\affil[1]{\orgdiv{School of Physical Science and Technology}, \orgname{Beijing University of Posts and Telecommunications}, \orgaddress{\city{Beijing}, \postcode{100876}, \country{China}}}
\affil[2]{\orgname{China Mobile (Suzhou) Software Technology Company Limited}, \city{Suzhou}, \postcode{215163}, \country{China}}
\affil[3]{\orgdiv{Institute of Plasma Physics}, \orgname{Chinese Academy of Sciences}, \orgaddress{\city{Hefei}, \postcode{230031}, \state{Anhui}, \country{China}}}
\affil[4]{\orgdiv{College of Physics and Optoelectronic Engineering}, \orgname{Shenzhen University}, \orgaddress{\city{Shenzhen}, \postcode{518060}, \state{Guangdong}, \country{China}}}
\affil[5]{\orgname{Quantum Science Center of Guangdong-Hong Kong-Macao Greater Bay Area (Guangdong) }, \orgaddress{\city{Shenzhen}, \postcode{518045}, \state{Guangdong}, \country{China}}}


\abstract{The potential of Quantum Machine Learning (QML) in data-intensive science is strictly bottlenecked the difficulty of interfacing high-dimensional, chaotic classical data into resource-limited, noisy quantum processors. To bridge this gap, we introduce a physics-informed Koopman-Quantum hybrid framework, theoretically grounded in a representation-level structural isomorphism we establish between the Koopman operator, which linearizes nonlinear dynamics, and quantum evolution. Based on this theoretical foundation, we design a realizable NISQ-friendly pipeline: the Koopman operator functions as a physics-aware "data distiller," compressing waveforms into compact, "quantum-ready" features, which are subsequently processed by a modular, parallel quantum neural network. We validated this framework on 4,763 labeled channel sequences from 433 discharges of the tokamak system. The results demonstrate that our model achieves 97.0\% accuracy in screening corrupted diagnostic data, matching the performance of state-of-the-art deep classical CNNs while using orders-of-magnitude fewer trainable parameters. This work establishes a practical, physics-grounded paradigm for leveraging quantum processing in constrained environments, offering a scalable path for quantum-enhanced edge computing.}

\maketitle

\section{Introduction}\label{sec1}

The advent of quantum machine learning (QML) has ignited transformative prospects for data-intensive scientific discovery, promising exponential accelerations in pattern recognition and complex correlation analysis~\cite{biamonte2017quantum, Schuld2015introduction, farhi2018classification, liu2018quantum}.  An $n$-qubit quantum neural network (QNN) operates in a $2^n$-dimensional Hilbert space, theoretically allowing it to capture complex correlations in a way that is fundamentally different from classical tensor operations~\cite{benedetti2019parameterized, huang2021power}. Recent studies suggest that QNNs can exhibit strong generalization capabilities even with limited data~\cite{wei2023quantum,ullah2024quantum}. 
However, the practical deployment of QML faces a foundational paradox: while quantum systems inherently operate in exponentially large Hilbert spaces, ideally suited for high-dimensional data processing, contemporary Noisy Intermediate-Scale Quantum (NISQ) hardware remains severely constrained in both qubit count and coherence time~\cite{li2024learning, mcclean2018barren, stilck2021limitations}. This creates an acute input bottleneck: how to encode high-dimensional, chaotic, and noise-contaminated classical data into the limited quantum register without information loss, while simultaneously ensuring trainability and interpretability~\cite{schuld2021effect, huang2021power, cerezo2022challenges}? 
This challenge is particularly critical for edge-computing scenarios where quantum resources are at a premium, yet real-time processing of massive data streams is imperative—such as in fusion plasma control~\cite{kates2019predicting}, autonomous systems, or environmental monitoring~\cite{karniadakis2021physicsinformed}.

Magnetic confinement fusion (MCF) diagnostics epitomize this challenge. Next-generation devices like ITER~\cite{creely2020overview} and CFETR~\cite{song2014concept} will generate large scale diagnostic streams per pulse; yet, a significant fraction of these signals is corrupted by electromagnetic interference, sensor anomalies, and mechanical vibrations~\cite{Li2018An, kube2022near}. Consequently, automated diagnostic screening, filtering out such stochastic corruptions, is a prerequisite for reliable physics analysis. However, traditional rule-based filtering fails to adapt to novel plasma regimes~\cite{Ferreira2018JETTomography}, while classical deep learning methods, despite their power, require massive parameterization ($\sim 10^4\text{--}10^5$ parameters) and suffer from data inefficiency and poor interpretability~\cite{Guo2021EASTLSTM,LAN2019159}. Given the promise of QML as a lightweight alternative, these limitations highlight a broader need: a computational framework capable of bridging the representational gap between complex classical data and resource-constrained quantum processors, while preserving the physics-driven interpretability essential for scientific validation~\cite{Huang2022Quantum}.

Here, we introduce a physics-informed classical-quantum hybrid interface that fundamentally rethinks how classical data can be prepared for quantum advantage in the NISQ era. Our approach is anchored in a deep mathematical insight: the Koopman operator~\cite{mezic2013analysis,brunton2021modern}, which linearizes nonlinear dynamical systems in an infinite-dimensional observable space, shares a structural isomorphism with the unitary evolution of quantum states in Hilbert space. 
By formalizing this isomorphism, we construct a principled two-stage pipeline: first, the Koopman operator acts as a physics-aware data distiller, compressing raw high-dimensional time-series into a low-dimensional subspace of interpretable features; second, these ``quantum-ready" features are processed by a modular, shallow variational quantum circuit designed specifically for NISQ constraints~\cite{Cerezo2021Variational,Bharti2022Noisy}.

Beyond its application to fusion diagnostics, this framework establishes a generalizable paradigm for quantum-enhanced edge computing~\cite{Furutanpey2023Architectural}. It demonstrates how domain-specific physical modeling (Koopman theory) can perform intelligent dimensionality reduction and feature extraction, effectively preparing classical big data for efficient quantum co-processing. This approach is applicable to any domain where data are high-dimensional, dynamically complex, and governed by partial physical laws—such as turbulence modeling in fluid dynamics, multi-scale climate simulations, or high-frequency financial time-series analysis~\cite{Lubasch2020Variational, Herman2023Quantum}. Our work thus addresses a critical question: how can we effectively ``feed" and leverage quantum co-processors with real-world classical data, especially in an era where quantum hardware resources will remain limited.

To validate the proposed framework, we implemented a NISQ-feasible hybrid architecture combining Koopman-based feature distillation with parallel quantum operations. Tested on 4,763 labeled channel sequences from 433 discharges of tokamak~\cite{Xie2025Neural}, our model achieves a state-of-the-art anomaly detection accuracy of $\sim 97.0\%$, matching deep classical convolutional neural networks (CNNs) while demonstrating an remarkable orders-of-magnitude reduction in parameter complexity. These findings provide compelling empirical evidence for the viability of quantum-enhanced computing in data-intensive scientific workflows.

This work thus transcends its immediate application to fusion diagnostics. It presents a validated pathway for overcoming the quantum input bottleneck, establishing a new paradigm where physical priors and quantum processing synergistically enable efficient, interpretable machine learning on classical big data, which is a crucial step toward realizing practical quantum advantage in the NISQ era and beyond.

\section{Result}
\label{sec:result}

\subsection{Problem Description and Structural Isomorphism}

\begin{figure}[!ht]
    \centering
    \includegraphics[width=1\linewidth]{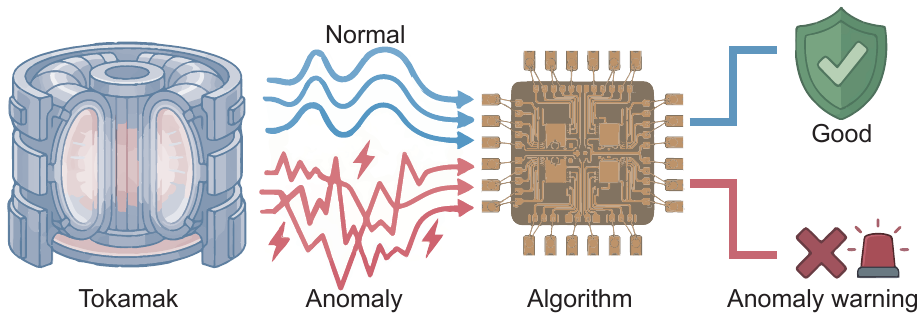}
    \caption{Overview of the diagnostic screening logic. 
     The tokamak environment generates signal streams where valid physical data (Normal, blue) is frequently intermixed with stochastic corruptions (Anomaly, red).
     The objective is to map these high-dimensional inputs to a binary decision, distinguishing scientifically usable data from corrupted ``Anomaly Warnings'' to protect downstream analysis.}    \label{fig:problem}
\end{figure}

\textbf{Diagnostic Challenges and Mathematical Formulation.} 
Next-generation MCF devices, such as ITER and CFETR, face formidable data challenges in real-time diagnostics. Taking POlarimeter-INTferometer system (POINT) system on the Experimental Advanced Superconducting Tokamak (EAST) as an example, its measured line-integrated electron density and Faraday rotation angle are critical parameters for plasma control and physics analysis. Each discharge produces multi-channel time-series data, denoted as \(\mathbf{x} = \{x_{t}\}_{t =0}^{T-1}\), where \(T\) is the number of time steps. In the harsh electromagnetic environment of a tokamak, the raw diagnostic signal stream is a mixture of valid physical signals and corruptions caused by fringe jumps, electromagnetic interference, and mechanical vibrations (Fig.~\ref{fig:problem}). These anomalies can render the signal scientifically unusable.

Thus, the objective of this study is to construct a binary classification function \(f: \mathbb{R}^T \to \{0,1\}\), acting as an automated gatekeeper that maps a standardized time-series segment $\mathbf{x}$ to a label $y$. Here, $y = 1$ denotes a physically consistent valid discharge (Normal), and $y = 0$ denotes a corrupted discharge (anomaly). Because plasma behavior evolves dynamically across different confinement modes, and the frequency content of legitimate physical events (e.g., edge localized modes) often overlaps with that of noise, simple threshold- or spectral-based filters are inadequate for this task.

\textbf{Theoretical Core: Koopman--Quantum Structural Isomorphism.}
To bridge the bottleneck between high-dimensional, chaotic classical data and quantum processors, we introduce a physics-informed classical--quantum hybrid interface.
The core insight is a representation-level structural isomorphism between Koopman evolution of observables and quantum evolution in Hilbert space, which is formalized in Methods (Sec.~\ref{method:Koopman-Quantum Isomorphism}) and sketched as follows.

Let $\{U_K^t\}$ be the Koopman operator family acting on observables, and let $\{\mathcal{U}^t\}$ denote a quantum evolution family on a Hilbert space.
Both frameworks realize the same time-translation algebra,
\begin{equation}
U_K^{t+s}=U_K^tU_K^s,
\qquad
\mathcal{U}^{t+s}=\mathcal{U}^t\mathcal{U}^s,
\label{eq:time-translation}
\end{equation}
which justifies embedding Koopman-linearized features into quantum Hilbert space for compatible processing (unitary for closed systems, and compatible via dilation for effective non-unitary dynamics).

In practice, we estimate a finite-dimensional Koopman generator $A$ such that $\dot{\mathbf{v}}(t)\approx A\mathbf{v}(t)$, yielding $\widehat{\mathbf{v}}(t)=e^{At}\mathbf{v}(0)$, which is a Koopman reconstruction of $\mathbf{x}$ [See Appendix \ref{app:koopman_math} for mathematical details]. An anomaly-sensitive residual vector is thus
\[
\mathbf{r}(t):=\dot{\mathbf{v}}(t)-A\mathbf{v}(t).
\]
Crucially, under a standard discretization, the residual energy admits a quadratic form
\begin{equation}
E(t):=\|\mathbf{r}(t)\|_2^2 \approx \tilde{\mathbf{v}}(t)^\dagger \mathcal{O}_{\mathrm{res}} \tilde{\mathbf{v}}(t),
\label{eq:residual-energy}
\end{equation}
where $\mathcal{O}_{\mathrm{res}}\succeq 0$ is determined by the difference scheme.
Eq.~\eqref{eq:residual-energy} is formally analogous to a quantum expectation value $\langle \psi | \hat{O} | \psi \rangle$: under a standard state encoding of $\tilde{\mathbf{v}}(t)$, the quadratic form can be represented as the expectation value of a corresponding observable acting on the encoded state.
This mapping is not merely heuristic; in Methods (Sec.~\ref{method:Koopman-Quantum Isomorphism}) we make the correspondence explicit in our restricted sense, including an isometric embedding between Koopman features and quantum states on the chosen finite feature subspace.
Consequently, Koopman reconstruction and residual statistics provide a physics-informed, ``quantum-ready'' interface suitable for processing by parameterized quantum circuits.

\subsection{Framework of the Classical-Quantum Hybrid Architecture}
To operationalize the structural isomorphism described above, we introduce the Koopman-PQNN architecture. As illustrated in Fig.~\ref{fig:pipeline}(a), this end-to-end pipeline integrates domain-specific physical modeling with quantum co-processing to bridge the gap between high-dimensional plasma data and NISQ constraints. The architecture comprises three functionally distinct yet mathematically coupled modules: a classical Koopman Embedding layer acting as a physics-informed filter, a Parallel Quantum Neural Network (PQNN) for non-linear feature processing in Hilbert space, and a measurement-based anomaly detection head. The specific implementation of each stage is detailed below (see Fig.~\ref{fig:pipeline}(b)-(c)).

\begin{figure}[!ht]
    \centering
    \includegraphics[width=0.95\linewidth]{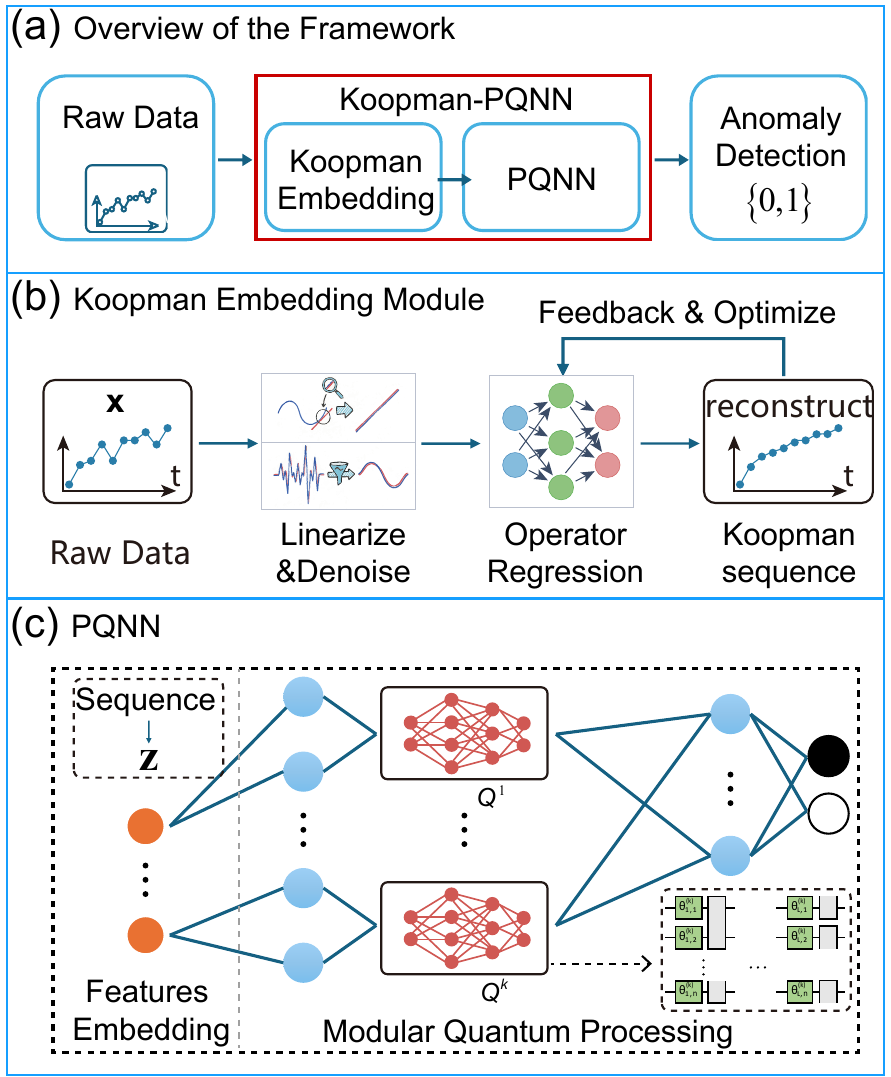}
    \caption{Architecture of the Koopman-PQNN Framework.  
(a) Overview: The system ingests raw multi-channel diagnostics, processes them through the proposed hybrid architecture, and outputs a binary validity score.
(b) Koopman Embedding Module: Raw chaotic signals are lifted into a linear space and reconstructed. 
(c) PQNN: To handle NISQ constraints, the feature vector is encoded and split across parallel Parameterized Quantum Circuits (PQCs), where each sub-circuit evolves features through variational layers.}
\label{fig:pipeline}
\end{figure}

\textbf{Koopman Embedding Module}: For a discrete-time system, the Koopman operator advances observables linearly. We estimate a finite-rank approximation of this operator to isolate dominant spectral modes. As shown in Fig.~\ref{fig:pipeline}(b), the feature extraction proceeds in three steps [See Appendix ~\ref{app:data} for details]:
(1) Lifting: Delay embedding via a Hankel matrix~\cite{mauroy2020koopman} projects the time series into a high-dimensional observable space.
(2) Spectral Decomposition: Singular Value Decomposition is performed on the Hankel matrix. The top $r$ singular vectors are retained to define a Koopman-invariant subspace, yielding the dominant dynamic modes $\mathbf{v}(t)$.
(3) Residual Characterization: Time derivatives $\dot{\mathbf{v}}(t)$ are estimated via finite differences, and a linear operator $A$ is regressed such that $A\mathbf{v}(t) \approx \dot{\mathbf{v}}(t)$. The reconstruction residuals—representing the non-linear deviations not captured by $A$—are then analyzed.
Finally, scalar statistics of these residuals (e.g., moments, skewness, kurtosis) are computed to form the compact feature vector $\mathbf{z} \in \mathbb{R}^m$. These Koopman-derived features preserve essential dynamics while reducing dimensionality by orders of magnitude~\cite{LAN2019159}, providing noise-robust inputs compatible with linear quantum operations~\cite{zhang2019dynamics}.

\textbf{Parallel Quantum Neural Network}: To handle both unitary and non-unitary cases while leveraging current NISQ devices, we adopt a modular design [See Fig.~\ref{fig:pipeline} (c) and Appendix~\ref{app:PQNN} for details]. The input features \(\mathbf{z}\in\mathbb{R}^{m}\) are mapped $ \mathbf{h} = W_{enc}\,\mathbf{z} + \mathbf{b}$, where \(m\) denotes the number of statistical features.
Here \(W_{enc}\in\mathbb{R}^{(nk)\times m}\) and \(\mathbf{b}\in\mathbb{R}^{nk}\) are randomly initialized and can be fixed during the entire training. \(k\) denotes the number of parallel sub-circuits, and \(n\) is the parameters per circuit. 
The projected vector $\mathbf{h}$ is then partitioned into $k$ sub-vectors $[\mathbf{h}^{(0)};\,\dots;\,\mathbf{h}^{(k-1)}]$. Each sub-vector $\mathbf{h}^{(j)} = [\phi_1^{(j)}, \dots, \phi_n^{(j)}]^T$ serves as the encoding angles for the $j$-th sub-circuit $Q^{(j)}$.
The quantum processing within each sub-circuit $Q^{(j)}$ consists of two stages:
(1) Angle Encoding: The classical information is embedded into the quantum state via single-qubit Pauli-$Y$ rotations:
$R_{y}(\phi^{(j)}_{i}) = \exp\bigl(-i\,\sigma_{y}\,\phi^{(j)}_{i}\bigr), i \in \{1,\dots,n\}$.
(2) Variational Evolution: $L$ layers of parameterized quantum gates are applied (Fig.~\ref{fig:pipeline}(c)), comprising fixed entangling operations and trainable local rotations.
This divide-and-conquer strategy not only captures non-trivial correlations via quantum entanglement but also significantly reduces the circuit depth, ensuring feasibility on near-term devices.

\textbf{Anomaly Detection}: To extract classical information, we perform Pauli-\(Z\) measurements on all qubits in each sub-circuit. This yields a local expectation vector \(\mathbf{e}^{(j)}\in \mathbb{R}^{n}\) for the \(j\)-th sub-circuit. 
These local vectors are concatenated into a global quantum feature vector, \(\bigl[\mathbf{e}^{(0)};\,\dots;\,\mathbf{e}^{(k-1)}\bigr]\).
Finally, a classical output layer with fixed weights aggregates these quantum features to estimate the anomaly probability and hence \(\hat{y} \in [0,1]\).
In this work, only the parameters within the variational quantum layers are optimized, ensuring the quantum Hilbert space is effectively explored to resolve non-linear decision boundaries.

\subsection{Experimental Setup and Performance Comparison}

\begin{figure}[!ht]
    \centering
    \includegraphics[width=0.6\linewidth]{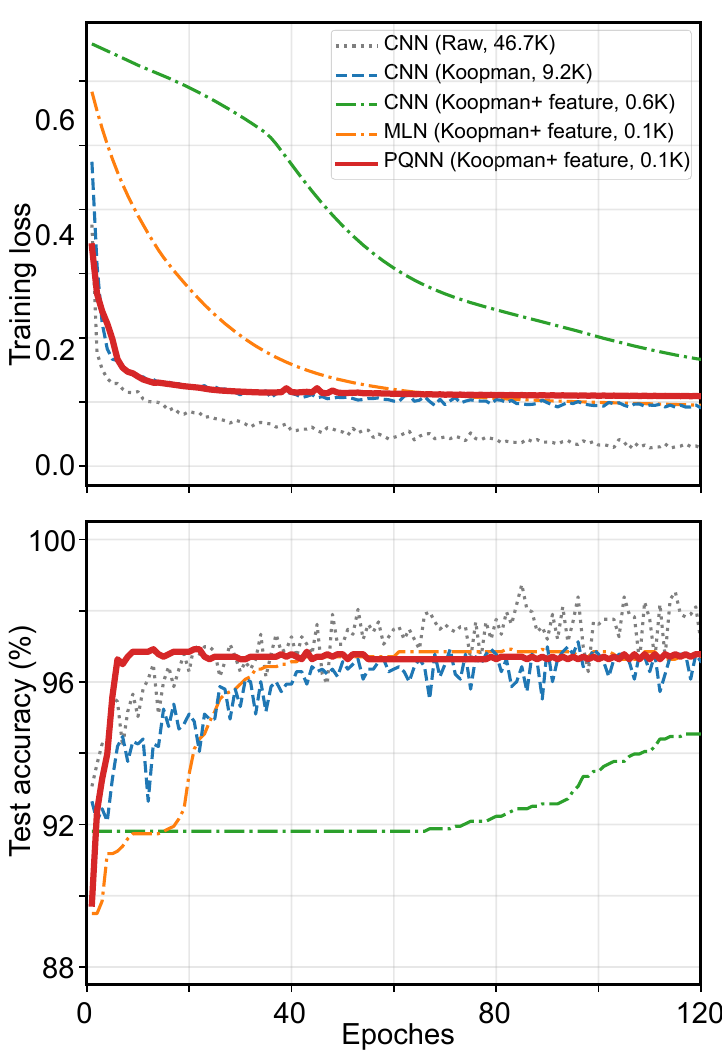}
    \caption{Comparison of training loss and test accuracy. The plots benchmark the proposed Koopman-PQNN (Red solid line) against four baselines: Raw-Data CNN (Grey dotted line), CNN on Koopman sequences (Blue dashed line), CNN on Koopman features (Green dash-dot line), and Classical  multi-layer network (MLN) on Koopman features (Orange dash-dot line).
    (a) Training Loss: The curves demonstrate that all models successfully minimize the objective function and achieve stable convergence during the training process. (b) Test Accuracy: The evolution of testing accuracy indicates that while the computationally heavy Raw-Data CNN achieves a slightly superior performance ceiling ($\sim 98.0\%$), the proposed Koopman-PQNN and other physics-informed baselines all consistently converge to a comparable and robust accuracy level of $\sim 97.0\%$.}
    \label{fig:result}
\end{figure}

We validated the proposed framework on the EAST POINT diagnostic dataset, which comprises 11 channels of data from 433 discharges, yielding \(N = 4,763\) valid pulse sequences after filtering. The data were randomly split into a training set (70\%, \(N_{\mathrm{train}} = 3,334\)) and a test set (30\%, \(N_{\mathrm{test}} = 1,429\)). We constructed a three-stage processing pipeline: starting from high-frequency raw waveforms, projecting onto the invariant subspace to obtain Koopman sequences of residuals, and finally extracting statistical descriptors from residuals, hereafter referred to as Koopman features. Our Koopman-PQNN specifically operates on a compact 6-dimensional feature vector \(\mathbf{z}\) derived from the residuals [See Appendix~\ref{app:data} for details]. Prior to quantum encoding, these vectors were min-max scaled to the \([0, \pi]\) interval to align with the periodic sensitivity of quantum rotation gates.

To rigorously position our method within the landscape of fusion diagnostics, we benchmarked Koopman-PQNN against four representative paradigms. Comparative results are summarized in Fig.~\ref{fig:result}:

\textbf{Accuracy Ceiling: Deep CNN on Raw Data}. As a computational performance upper bound, we trained a deep 1D CNN~\cite{Li2022Convolutional} directly on the standardized raw time-series. This brute-force approach, leveraging deep layers to automatically learn filters from full-resolution sequences, achieved the highest test accuracy of 98.0\%. However, this marginal 1\% gain over other methods comes at a prohibitive cost: the model requires approximately 46.7k trainable parameters and processes high-dimensional inputs, creating a computational bottleneck unsuitable for real-time edge deployment. Notably, when constraining the Raw-Data CNN to a smaller parameter budget (\(\sim\)9.2k), its accuracy dropped significantly to 94.0\%, highlighting heavy parameterization required to extract features from raw data.

\textbf{Koopman-PQNN}. In contrast, our proposed Koopman-PQNN achieved a test accuracy of \(\sim\)97.0\%. While slightly below the deep learning ceiling (1\% gap), this was accomplished using only $0.1$k parameters—a remarkable reduction of two orders of magnitude compared to the Raw-Data CNN. This result demonstrates that the Koopman operator acts as a highly efficient physics filter, capturing the vast majority of relevant dynamics. The PQNN effectively matches the performance of other feature-based methods but with superior parameter efficiency, offering the optimal balance for resource-constrained diagnostic systems.

\textbf{Intermediate Baselines}. We also evaluated intermediate approaches to isolate the contributions of different modules. \textbf{Applying a CNN to the Koopman sequences of residuals} (9.2k parameters) yielded \(\sim\)97.0\% accuracy, confirming that Koopman linearization simplifies the learning task, allowing a smaller CNN to perform well, albeit still requiring a significant number of parameters. \textbf{Training a CNN model on the 6D Koopman features} (0.6k parameters) also converged to \(\sim\)97.0\% test accuracy. Crucially, we employed a classical multi-layer neural network (MLN) as a baseline for the quantum architecture. Mathematically, the unitary transformations inherent to quantum circuits constitute a strict subset of the general linear transformations accessible to classical layers. Consequently, the classical MLN serves as an unconstrained upper bound for expressivity. The fact that our unitarity-constrained PQNN matches the performance of the general MLN validates that the proposed quantum ansatz possesses comparable expressivity to capture the decision boundary without requiring the full parameter space of classical networks.

The training loss and test accuracy curves in Fig.~\ref{fig:result} show that all models successfully minimized the objective function and achieved stable convergence during training. While the computationally intensive Raw-Data CNN reached a slightly superior performance ceiling (\(\sim\)98.0\%), the proposed Koopman-PQNN and other physics-informed baselines all consistently converged to a comparable and robust accuracy level of \(\sim\)97.0\%.

\subsection{Ablation Study and Mechanism Analysis}
To deconstruct the performance gains and understand the underlying mechanisms of our hybrid architecture, we conducted a multi-stage ablation study, focusing on two critical phases: the impact of residuals via Koopman linearization, and the expressivity of the quantum ansatz compared to classical function approximators in the compressed Koopman feature space.

\subsubsection{Impact of Residuals via Koopman Linearization}
We first analyze the transition from raw diagnostic data to the residual statistics derived from Koopman analysis. The comparison between the Raw-Data CNN (98.0\%, 46.7k params) and the Koopman-based CNN (97.0\%) reveals a crucial mechanism introduced by the hybrid embedding.
While relying on residuals incurs a marginal accuracy loss (1\%) due to the information compression, the Koopman operator effectively acts as a physics-informed background subtractor. By explicitly modeling and removing the dominant, predictable linear dynamics, the resulting residuals isolate the pure non-linear deviations and stochastic bursts characteristic of plasma disruptions. Unlike raw data, where the model must learn both the global pulse evolution and the local anomalies, the residual space allows the classifier to focus solely on the discriminative innovation signal. This significantly reduces the representational complexity, enabling a lightweight network to capture the decision boundary with high efficiency, verifying that the residuals successfully distill the essential entropic information required for screening.

\subsubsection{Quantum Expressivity and Classical Equivalence}
This section investigates the mapping from statistical features to classification labels. Driven by the qubit count limitations of current hardware, we compressed the sequences into 6-dimensional statistical descriptors. Here, we address a fundamental theoretical question: Is the quantum circuit strictly necessary for this low-dimensional manifold?

\textbf{Parameter Efficiency vs. CNNs}. We isolate the efficiency of the classifier operating on the Koopman feature space. While a CNN trained on these features can achieve high accuracy, it typically requires 0.6k parameters to construct the decision hyperplanes. In contrast, our PQNN achieves comparable state-of-the-art accuracy with only $\sim$0.1k trainable parameters. This demonstrates that the quantum ansatz provides a significantly more compact and parameter-efficient representation for physics-informed features.

\textbf{Mathematical Subset and Convergence Dynamics}. We compare the PQNN against a classical MLN to probe the theoretical bounds of our approach. Mathematically, the unitary transformations executed by a quantum circuit constitute a strict subset of the general linear transformations allowed by the weight matrices of a classical MLN. Consequently, a classical MLN is theoretically capable of approximating any function the quantum circuit can produce. Our experimental results confirm this inclusion relation: an MLN trained on the same 6D Koopman features eventually achieved a comparable test accuracy of $\sim$97.0\%.

However, this mathematical containment overlooks crucial practical distinctions. First, in training dynamics, the PQNN converged to its high-performance solution significantly faster and with greater stability than the MLN (Fig.~\ref{fig:result}a). This suggests that the unitary inductive bias—the constraint that the circuit must perform a norm-preserving evolution—sculpts a more structured optimization landscape. The quantum model acts as a ``fast learner,'' efficiently capturing essential feature correlations with far fewer parameter updates.

\textbf{The Scalability Argument}. While classical MLNs suffice for this specific, short-pulse validation case, the core argument for our hybrid paradigm hinges on scalability to high-complexity regimes. Future fusion diagnostics (e.g., ITER steady-state operations) will involve longer pulse durations and multi-modal data fusion, necessitating high-dimensional feature spaces. In such regimes, the expressive power of a quantum circuit, driven by entanglement and superposition, scales exponentially with the number of qubits. This offers a pathway to model complexities that would require prohibitively large classical networks. Therefore, our present work serves as a critical proof-of-principle: it demonstrates that the Koopman-quantum pipeline is not only performant but is architected for scalability, establishing a foundation for future applications where classical methods may hit fundamental efficiency barriers.

\begin{figure}[!ht]
    \centering
    \includegraphics[width=1\linewidth]{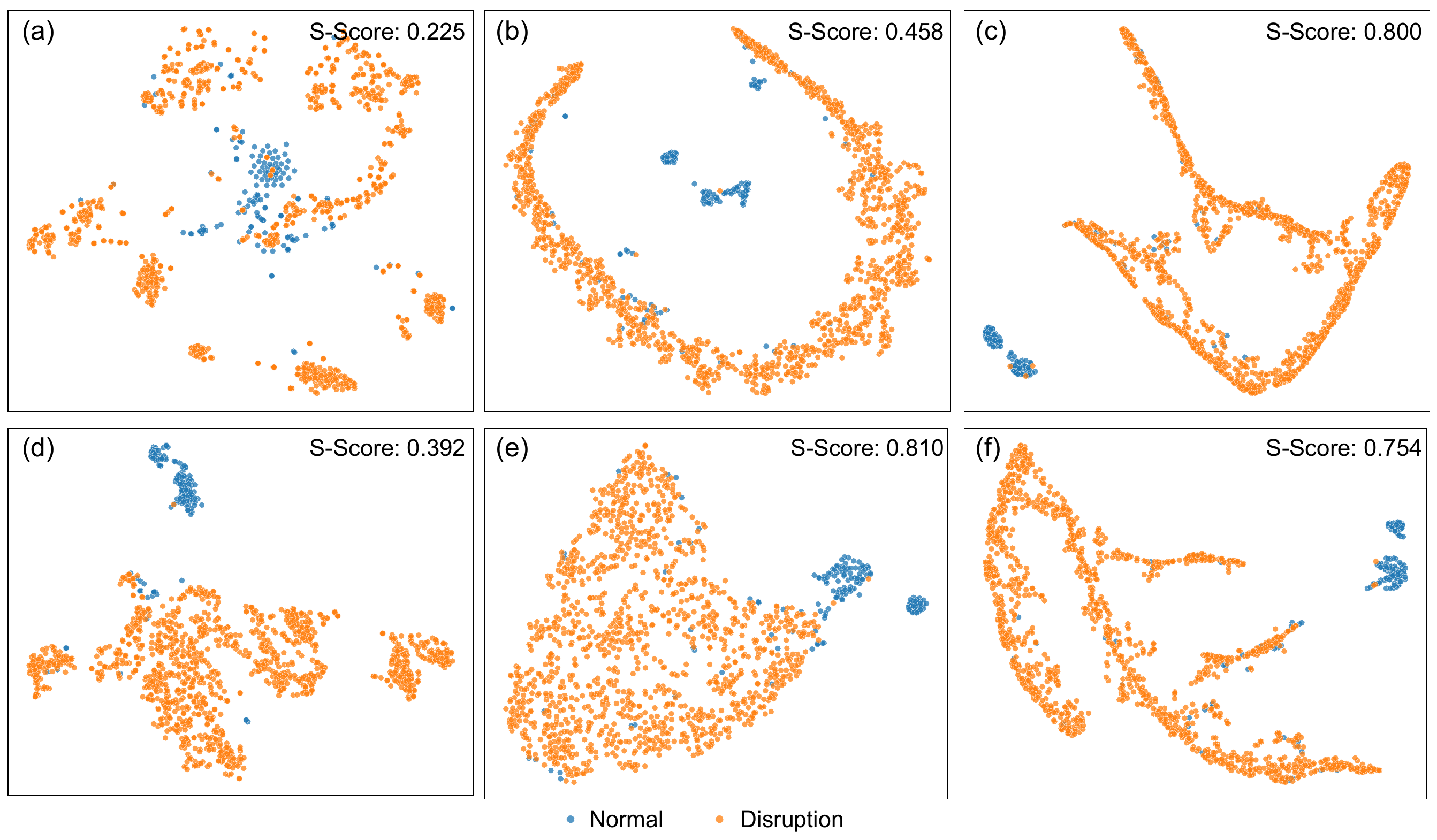}
    \caption{Evolution of feature separability across different paradigms.
    The t-SNE projections visualize the test set samples (Orange: Stable, Blue: Anomaly) with their corresponding S-Score.
    (a) Raw Data (S=0.225) and (b) Koopman Sequences (S=0.458) illustrate the intrinsic difficulty of the task with heavy class overlap.
    (d) Raw-Data CNN (S=0.392) successfully separates classes but lacks cluster cohesion.
    (e) Koopman Seq CNN establishes the topological ceiling with the highest separation (S=0.810).
    (c) Our Method (Koopman-PQNN) achieves a remarkable S-Score of 0.800, significantly outperforming the classical (f) Feature CNN (S=0.754).
    This demonstrates that the lightweight quantum circuit matches the high-quality clustering of deep Koopman baselines while drastically outperforming the loose feature representation of the Raw-Data CNN.}
    \label{fig:tsne}
\end{figure}

\subsubsection{Visual Interpretability and Topological Distinctness}
To corroborate the classification metrics and intuitively understand how the proposed architecture transforms the data manifold, we employed t-Distributed Stochastic Neighbor Embedding (t-SNE) to project high-dimensional feature representations onto a 2D plane (Fig.~\ref{fig:tsne}, Appendix \ref{app:visualization})~\cite{Laurens2008Visualizing}. Furthermore, to move beyond qualitative assessment, we calculated the Silhouette Score (S-Score) for each projection. The S-Score ranges from -1 to 1, where a value closer to 1 indicates perfectly separated and cohesive clusters, providing a rigorous quantitative metric for decision boundary quality~\cite{Rousseeuw1987Silhouettes}.

\textbf{Intrinsic Data Structure}. The projection of standardized Raw Data [Fig.~\ref{fig:tsne}(a)] reveals a highly entangled topology with no clear separation margin between the Stable (orange) and Anomaly (blue) classes. This visual ambiguity is quantitatively confirmed by a low S-Score of 0.225, explaining why traditional threshold-based methods fail on raw diagnostics. The residuals of Koopman sequences [Fig.~\ref{fig:tsne}(b)] exhibit a slightly improved local structure (S-Score: 0.458) due to the filtering of high-frequency stochastic noise. However, the classes remain linearly inseparable, indicating that while Koopman linearization simplifies the dynamics, it does not by itself solve the classification problem without a non-linear classifier.

\textbf{Classical Benchmarks}. Comparing classical deep learning baselines reveals a fundamental distinction in feature learning. The Raw-Data CNN [Fig.~\ref{fig:tsne}(d)] successfully separates the classes (as evidenced by its 98\% accuracy), but the clusters remain loose and dispersed, resulting in a relatively low S-Score of 0.392. This suggests that while the brute-force CNN finds a valid decision boundary, it does not necessarily map the data into a compact, structurally cohesive manifold. In contrast, the Koopman sequence CNN [Fig.~\ref{fig:tsne}(e)] achieves the highest clustering quality with an S-Score of 0.810. By operating on the physically linearized Koopman sequences, this model forces the classes into tight, distinct islands. Similarly, the CNN on Koopman features [Fig.~\ref{fig:tsne}(f)] achieves a high S-Score of 0.754, confirming that the statistical moments extracted from the Koopman space offer a naturally cohesive representation.

\textbf{Quantum Topological Distinctness}. Crucially, our proposed Koopman-PQNN [Fig.~\ref{fig:tsne}(c)] generates a highly distinct decision landscape with an S-Score of 0.800. This result is significant for two reasons: First, purely in terms of separation quality, the PQNN outperforms the classical feature CNN (0.800 vs. 0.754) and effectively matches the ceiling set by the Koopman sequence CNN (0.810). Second, it highlights a unique quantum advantage in representation learning: while the Raw-Data CNN requires 46k parameters to achieve separation, the PQNN achieves both separation and high cohesion using only $\sim$0.1k parameters.

\section{Discussion}
\label{sec:discussion_conclusion}
In this work, we have introduced and validated a novel, operator-theoretic quantum machine learning pipeline for the validation of MCF diagnostics. By bridging Koopman operator theory with quantum operations, we present the first application of a physics-informed quantum protocol to the cleaning of fusion experimental data.

\textbf{Performance and Efficiency Trade-off.}
Validated on 4,763 labeled channel sequences from 433 discharges of the EAST tokamak, the proposed Koopman-PQNN demonstrates a distinct advantage over existing paradigms. 
While achieving accuracy parity with state-of-the-art deep learning benchmarks (e.g., Raw-Data CNNs at $\sim 98.0\%$), our hybrid model distinguishes itself by superior efficiency and interpretability. 
Specifically, while the Raw-Data CNN suffers from loose feature clustering (S-Score $\approx 0.39$) despite its heavy parameterization ($\sim 46.7$k), the Koopman-PQNN achieves highly cohesive clusters (S-Score $\approx 0.80$) with two orders of magnitude fewer parameters ($\sim 0.1$k). 
This confirms that embedding physical laws into the learning process allows lightweight quantum circuits to serve as robust, efficient gatekeepers for scientific data, matching the performance of deep networks without their computational overhead.

\textbf{Mechanism.}
Our experimental results demonstrate that the Koopman-quantum hybrid architecture achieves an efficient functional equivalence. The Koopman operator acts as a physics-informed spectral sieve: rather than merely compressing data, it models and subtracts the dominant linear dynamics, thereby isolating the non-linear residuals that carry the critical signatures of plasma anomalies.  Subsequently, leveraging the structural correspondence established in our method, the variational quantum circuit processes these ``observable-like'' residual features. The entanglement layers effectively reconstruct the complex correlations within this purified feature space, resolving the decision manifold within the high-dimensional Hilbert space using minimal parameters.

The superiority of this division of labor is reflected in two key aspects: First, in terms of performance, our model matches the accuracy of deep convolutional networks with a 100-fold improvement in parameter efficiency. Second, in terms of representation quality, our model generates feature clusters with significantly higher topological compactness (S-Score 0.80 vs. 0.39). These findings confirm that we are not merely concatenating two modules, but rather realizing a synergistic enhancement: the classical component uses physical priors to filter out the trivial linear evolution, while the quantum component dedicates its scarce resources to analyzing the intractable non-linear deviations. This paradigm provides a clear pathway toward realizing quantum advantage on NISQ devices: identifying rare events by focusing quantum power solely on the physics that classical linear models fail to capture.

\textbf{Scalability and Future Outlook.}
Tailored to the constraints of the NISQ era, our architecture adopts a parallel modular design. This approach decouples the processing of multi-channel inputs, thereby avoiding the prohibitive coherence costs associated with large, fully entangled quantum registers. Consequently, the framework ensures linear scalability with constant circuit depth, offering a viable path to process the petabyte-scale diagnostic streams anticipated in future devices like ITER.

To our knowledge, this represents the first demonstration of a physics-aligned quantum pipeline processing real-world fusion experimental data. Looking ahead, future work will focus on deploying this architecture on cloud-access quantum hardware to validate its resilience against real device noise, and extending the Koopman-Quantum paradigm to multi-modal fusion diagnostics to support integrated, real-time plasma control.

\section{Methods $:$ Koopman-Quantum Isomorphism}  
\label{method:Koopman-Quantum Isomorphism}

This section formalizes the structural correspondence between Koopman operator evolution of classical observables and quantum dynamical evolution, which underpins our hybrid Koopman--quantum pipeline.

\begin{remark}
Throughout, ``Koopman--Quantum Isomorphism'' refers to a representation/structure-level correspondence: both theories provide linear operator representations of time translation, and finite-dimensional Koopman features admit an isometric embedding into a quantum Hilbert space for compatible processing.
This notion does not require a global conjugacy between the full Koopman operator and a specific physical Hamiltonian.
\end{remark}

\subsection{Structural Correspondence of One-Parameter Linear Operator Families}
\label{subsec:KQI-step1}

Let \(M\) be a smooth manifold and consider an autonomous dynamical system
\[
\dot{\mathbf{x}}=\mathbf{F}(\mathbf{x}),\qquad \mathbf{x}\in M,
\]
with flow \(\mathbf{S}^t:M\to M\) satisfying \(\mathbf{S}^{t+s}=\mathbf{S}^t\circ \mathbf{S}^s \) and \(\mathbf{S}^0=I\).

\begin{definition}[Koopman operator family]
Let \(\mathcal{F}\) be a suitable linear space of observables \(g:M\to\mathbb{C}\).
The Koopman operator family \(\{U_K^t\}_{t\in\mathbb{R}}\) is defined by
\[
(U_K^t g)(\mathbf{x}) := g(\mathbf{S}^t(\mathbf{x})),\qquad \forall g\in\mathcal{F}.
\]
\end{definition}

\begin{proposition}[Basic properties of \(U_K^t\)]
The Koopman operators are linear and satisfy the time-translation composition law \cite{mauroy2020koopman}:
\[
U_K^{t+s}=U_K^tU_K^s,\qquad U_K^0=I.
\]
If the flow is invertible for all \(t\), then \(U_K^{-t}=(U_K^t)^{-1}\) and \(\{U_K^t\}\) forms a one-parameter group.
\end{proposition}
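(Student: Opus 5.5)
The argument is a direct computation from the definition of \(U_K^t\) as composition with the flow. No analytic input is needed; the only structural facts used are the flow identities \(\mathbf{S}^{t+s}=\mathbf{S}^t\circ\mathbf{S}^s\) and \(\mathbf{S}^0=I\). I assume throughout that the observable space \(\mathcal{F}\) is invariant under composition with \(\mathbf{S}^t\) for every \(t\), i.e. \(g\circ\mathbf{S}^t\in\mathcal{F}\) whenever \(g\in\mathcal{F}\). This is implicit in the phrase ``suitable linear space'', and it guarantees that each \(U_K^t\) maps \(\mathcal{F}\) into itself, so that products such as \(U_K^tU_K^s\) are defined.

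\textbf{Linearity.} For \(g,h\in\mathcal{F}\) and \(\alpha,\beta\in\mathbb{C}\), evaluate pointwise:
\[
\bigl(U_K^t(\alpha g+\beta h)\bigr)(\mathbf{x})=(\alpha g+\beta h)(\mathbf{S}^t(\mathbf{x}))=\alpha g(\mathbf{S}^t(\mathbf{x}))+\beta h(\mathbf{S}^t(\mathbf{x})).
\]
The right-hand side equals \(\alpha\,(U_K^tg)(\mathbf{x})+\beta\,(U_K^th)(\mathbf{x})\). Linearity holds even though \(\mathbf{S}^t\) itself may be nonlinear, because the linear combination is taken in the codomain of the observables.

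\textbf{Composition law.} Fix \(g\in\mathcal{F}\) and set \(h=U_K^s g=g\circ\mathbf{S}^s\). Then
\[
(U_K^tU_K^sg)(\mathbf{x})=h(\mathbf{S}^t(\mathbf{x}))=g\bigl(\mathbf{S}^s(\mathbf{S}^t(\mathbf{x}))\bigr)=g\bigl(\mathbf{S}^{s+t}(\mathbf{x})\bigr)=(U_K^{t+s}g)(\mathbf{x}).
\]
The order of composition must be tracked carefully. Composition with the flow reverses the order of the maps, giving \(\mathbf{S}^s\circ\mathbf{S}^t\) rather than \(\mathbf{S}^t\circ\mathbf{S}^s\). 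This causes no trouble here, since for an autonomous flow \(\mathbf{S}^s\circ\mathbf{S}^t=\mathbf{S}^{s+t}=\mathbf{S}^{t+s}\); time addition is commutative. For the identity, \(\mathbf{S}^0=I\) gives \((U_K^0g)(\mathbf{x})=g(\mathbf{x})\), so \(U_K^0=I\).

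\textbf{Group property.} Assume the flow is defined for all \(t\in\mathbb{R}\). Applying the composition law with \(s=-t\) gives
\[
U_K^tU_K^{-t}=U_K^{-t}U_K^t=U_K^0=I.
\]
Hence \(U_K^t\) is invertible on \(\mathcal{F}\) with \(U_K^{-t}=(U_K^t)^{-1}\). Together with the composition law and \(U_K^0=I\), this shows \(\{U_K^t\}_{t\in\mathbb{R}}\) is a one-parameter group.

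The only delicate point is this domain issue. If \(\mathcal{F}\) is a function space such as \(L^2(\mu)\), one needs measure-invariance of the flow, or at least boundedness of the composition operators, for \(U_K^t\) to preserve \(\mathcal{F}\). I will state this as an assumption rather than prove it.
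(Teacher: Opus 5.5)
Your proof is correct. The paper gives no argument of its own beyond citing Mauroy--Mezi\'c--Susuki, and your pointwise computation from $\mathbf{S}^{t+s}=\mathbf{S}^t\circ\mathbf{S}^s$ and $\mathbf{S}^0=I$, with the order reversal $\mathbf{S}^s\circ\mathbf{S}^t=\mathbf{S}^{s+t}$ handled correctly, is the standard argument behind that reference. Replacing ``invertible for all $t$'' by ``defined for all $t\in\mathbb{R}$'' is harmless, because an invertible semiflow extends to a group via $\mathbf{S}^{-t}:=(\mathbf{S}^t)^{-1}$. Your explicit invariance assumption on $\mathcal{F}$ is exactly what the paper's ``suitable linear space'' leaves implicit.
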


In quantum mechanics, a closed system evolves according to
\[
i\hbar\,\partial_t|\psi(t)\rangle = H|\psi(t)\rangle,
\]
with unitary propagator \(\mathcal{U}^t=e^{-iHt/\hbar}\).

\begin{definition}[Quantum evolution operator family]
The quantum evolution operator family \(\{\mathcal{U}^t\}_{t\in\mathbb{R}}\) is defined as
\[
\mathcal{U}^t := e^{-iHt/\hbar}.
\]
\end{definition}

\begin{proposition}[Basic properties of \(\mathcal{U}^t\)]
The operators \(\mathcal{U}^t\) are linear and form a one-parameter unitary group:
\[
\mathcal{U}^{t+s}=\mathcal{U}^t\mathcal{U}^s,\qquad \mathcal{U}^0=I,\qquad (\mathcal{U}^t)^{-1}=\mathcal{U}^{-t}.
\]
\end{proposition}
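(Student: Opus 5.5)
The plan is to derive all three properties directly from the spectral calculus for the self-adjoint Hamiltonian $H$. The two things to fix at the outset are what $e^{-iHt/\hbar}$ means and why it is unitary; the group law then follows by an exponent-addition argument.

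\textbf{Definition of the exponential.} In the finite-dimensional setting used in our pipeline, $H$ is a Hermitian matrix. I will diagonalize it as $H=V\Lambda V^\dagger$, with $V$ unitary and $\Lambda=\mathrm{diag}(\lambda_1,\dots,\lambda_d)$ real. This gives
\[
\mathcal{U}^t=V\,\mathrm{diag}\bigl(e^{-i\lambda_1 t/\hbar},\dots,e^{-i\lambda_d t/\hbar}\bigr)V^\dagger .
\]
Linearity of $\mathcal{U}^t$ is then immediate, since it is a matrix. In infinite dimensions I would invoke the spectral theorem instead, writing $\mathcal{U}^t=\int e^{-i\lambda t/\hbar}\,dP(\lambda)$. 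This is a bounded linear operator defined by the Borel functional calculus.

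\textbf{Group law.} The map $f\mapsto f(H)$ is a $*$-homomorphism. Combined with the scalar identity $e^{-i\lambda(t+s)/\hbar}=e^{-i\lambda t/\hbar}e^{-i\lambda s/\hbar}$, this yields $\mathcal{U}^{t+s}=\mathcal{U}^t\mathcal{U}^s$. It also gives $\mathcal{U}^0=f_0(H)$ with $f_0\equiv 1$, that is, $\mathcal{U}^0=I$. In the matrix case the same conclusion is simply the product of the diagonal factors. Alternatively, since $tH$ and $sH$ commute, $e^{A+B}=e^Ae^B$ for commuting $A,B$ gives it from the power series.

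\textbf{Unitarity and inverse.} Setting $s=-t$ in the group law gives $\mathcal{U}^t\mathcal{U}^{-t}=\mathcal{U}^{-t}\mathcal{U}^t=\mathcal{U}^0=I$, so $(\mathcal{U}^t)^{-1}=\mathcal{U}^{-t}$. For unitarity, I note that the functional calculus sends complex conjugation to the adjoint. Because $\lambda$ is real, $\overline{e^{-i\lambda t/\hbar}}=e^{i\lambda t/\hbar}$, and hence $(\mathcal{U}^t)^\dagger=\mathcal{U}^{-t}=(\mathcal{U}^t)^{-1}$.

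\textbf{Main obstacle.} The only delicate point is the unbounded case, where the power series need not converge. There self-adjointness of $H$ is essential, and one must rely on the spectral theorem (Stone's theorem framework) rather than on series manipulations. For the finite-dimensional feature spaces used in this paper, the matrix argument above suffices.
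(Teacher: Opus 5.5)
Your proof is correct. The paper gives no proof of this proposition: it treats the result as the standard textbook consequence of $H$ being Hermitian for a closed system (the ``unitary propagator'' remark just before the definition), and your spectral/functional-calculus argument is exactly that standard justification. You also correctly isolate that self-adjointness is needed only for unitarity, while the group law and $(\mathcal{U}^t)^{-1}=\mathcal{U}^{-t}$ hold for any matrix generator, which is consistent with the paper's later use of a non-Hermitian $H_{\mathrm{eff}}$ in Theorem~\ref{thm:isometric-embedding}.
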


\begin{theorem}[Common time-translation structure]
\label{thm:KQI-struct}
Both \(t\mapsto U_K^t\) and \(t\mapsto \mathcal{U}^t\) are homomorphisms from \((\mathbb{R},+)\) into the composition semigroup/group of linear operators on \(\mathcal{F}\) and \(\mathcal{H}\), respectively:
\[
U_K^{t+s}=U_K^tU_K^s,\qquad \mathcal{U}^{t+s}=\mathcal{U}^t\mathcal{U}^s.
\]
Hence, Koopman evolution of observables and quantum evolution of states share the same algebraic time-translation skeleton at the operator-family level.
\end{theorem}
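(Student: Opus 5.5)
The theorem follows by combining the two preceding propositions. We then need to check that "homomorphism from $(\mathbb{R},+)$" means exactly the composition law together with the identity at $t=0$. The plan is to verify both laws directly from the definitions, rather than only citing the propositions, so that the argument is self-contained.

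\textbf{Koopman side.} Fix $g\in\mathcal{F}$ and $\mathbf{x}\in M$.

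Linearity of $U_K^t$ is immediate, since $(U_K^t(\alpha g+\beta h))(\mathbf{x})=\alpha g(\mathbf{S}^t\mathbf{x})+\beta h(\mathbf{S}^t\mathbf{x})$.

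For the composition law, unwind the definition twice:
\[
(U_K^tU_K^s g)(\mathbf{x})=(U_K^s g)(\mathbf{S}^t\mathbf{x})=g(\mathbf{S}^s(\mathbf{S}^t\mathbf{x}))=g(\mathbf{S}^{s+t}\mathbf{x})=(U_K^{t+s}g)(\mathbf{x}).
\]
Here we use the flow property $\mathbf{S}^s\circ\mathbf{S}^t=\mathbf{S}^{t+s}$, and the commutativity of addition in $\mathbb{R}$ absorbs the order reversal that composition of pullbacks produces. The identity law follows from $\mathbf{S}^0=I$, which gives $U_K^0=I$.

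\textbf{Main obstacle.} The one genuine point is that $U_K^t$ must map $\mathcal{F}$ into itself, so that the $U_K^t$ are operators on $\mathcal{F}$ at all. This is what the phrase "suitable linear space" in the definition must guarantee. I would state it as a standing assumption: $\mathcal{F}$ is invariant under composition with the flow, as is the case for continuous bounded functions or for $L^2(\mu)$ with an invariant measure $\mu$. If the flow is only a semiflow, the map is defined on $t\ge0$ and gives a semigroup homomorphism from $([0,\infty),+)$; this is why the statement says "semigroup/group".

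\textbf{Quantum side.} Take $H$ self-adjoint on $\mathcal{H}$. For bounded $H$, the power series of the exponential converges absolutely. Since $-iHt$ and $-iHs$ commute, the identity $e^{X+Y}=e^Xe^Y$ for commuting operators gives
\[
\mathcal{U}^{t+s}=e^{-iH(t+s)/\hbar}=e^{-iHt/\hbar}e^{-iHs/\hbar}=\mathcal{U}^t\mathcal{U}^s,
\qquad \mathcal{U}^0=I.
\]
For unbounded $H$, I would instead invoke the functional calculus: $\mathcal{U}^t=\int e^{-i\lambda t/\hbar}\,dE(\lambda)$. The law then follows from multiplicativity of the functional calculus applied to $e^{-i\lambda(t+s)/\hbar}=e^{-i\lambda t/\hbar}e^{-i\lambda s/\hbar}$. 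This is Stone's theorem, which can be assumed.

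\textbf{Conclusion.} Putting the two sides together, both $t\mapsto U_K^t$ and $t\mapsto\mathcal{U}^t$ are monoid homomorphisms $(\mathbb{R},+)\to(\mathrm{End},\circ)$, which is exactly the "shared time-translation skeleton" claimed. When the flow is invertible, both maps land in the group of invertible operators, by the earlier propositions.
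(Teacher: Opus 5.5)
Your proof is correct and follows the same route as the paper. There the theorem is simply the conjunction of the two preceding basic-property propositions: the Koopman composition law, which the paper cites from Mauroy et al., and the one-parameter unitary group property of $e^{-iHt/\hbar}$. You verify both directly from the definitions instead of citing them, and your added care about the invariance of $\mathcal{F}$ under the flow and about unbounded $H$ via the functional calculus goes beyond what the paper states without changing the argument.
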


\subsection{Spectral Correspondence of Exponential Dynamical Modes}
\label{subsec:KQI-step2}

\begin{definition}[Koopman eigenfunction]
A nonzero \(\phi_\lambda\in\mathcal{F}\) is a Koopman eigenfunction with eigenvalue \(\lambda\in\mathbb{C}\) if
\[
U_K^t\phi_\lambda = e^{\lambda t}\phi_\lambda,\qquad \forall t\in\mathbb{R}.
\]
\end{definition}

\begin{proposition}[Multiplicative closure]
If \(\phi_{\lambda_1}\) and \(\phi_{\lambda_2}\) are Koopman eigenfunctions, then their pointwise product \(\phi_{\lambda_1}\phi_{\lambda_2}\) is also a Koopman eigenfunction with eigenvalue \(\lambda_1+\lambda_2\).
\end{proposition}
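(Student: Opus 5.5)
The plan is to verify the eigenfunction identity directly from the definition of the Koopman operator as a composition operator. The key structural fact is that \(U_K^t\) is multiplicative on observables:
\[
U_K^t(fg)=(U_K^t f)(U_K^t g)
\]
for any \(f,g\) for which the pointwise product is defined. This follows immediately from the definition \((U_K^t g)(\mathbf{x})=g(\mathbf{S}^t(\mathbf{x}))\), since
\[
(fg)(\mathbf{S}^t(\mathbf{x}))=f(\mathbf{S}^t(\mathbf{x}))\,g(\mathbf{S}^t(\mathbf{x})).
\]
No linearity or semigroup property from the earlier proposition is needed here; only the composition form is used.

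Next, we apply this to \(f=\phi_{\lambda_1}\) and \(g=\phi_{\lambda_2}\). Using the eigenfunction definition pointwise, for every \(\mathbf{x}\in M\) and \(t\in\mathbb{R}\),
\[
(U_K^t(\phi_{\lambda_1}\phi_{\lambda_2}))(\mathbf{x})=e^{\lambda_1 t}\phi_{\lambda_1}(\mathbf{x})\,e^{\lambda_2 t}\phi_{\lambda_2}(\mathbf{x})=e^{(\lambda_1+\lambda_2)t}(\phi_{\lambda_1}\phi_{\lambda_2})(\mathbf{x}).
\]
This is exactly the eigen-relation with eigenvalue \(\lambda_1+\lambda_2\).

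The remaining obstacle is not the computation but the hypotheses hidden in the definition of an eigenfunction, which requires a \emph{nonzero} element of \(\mathcal{F}\). I would handle this in two steps.

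First, I would require that \(\mathcal{F}\) be closed under pointwise multiplication, e.g.\ \(\mathcal{F}\) an algebra of functions such as continuous or smooth functions. This is part of what ``suitable'' should mean; in \(L^2\) settings closure can fail, so I would state it as an explicit assumption.

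Second, the product must not vanish identically. Each factor is nonzero somewhere, but the supports could in principle be disjoint. I would rule this out by one of two routes:
\begin{itemize}
\item assume \(M\) is connected and the eigenfunctions are real-analytic, so that neither factor vanishes on an open set; or
\item adopt the convention in the statement that the claim holds whenever \(\phi_{\lambda_1}\phi_{\lambda_2}\neq 0\).
\end{itemize}
I would state the second as the cleanest rigorous form, noting that it is automatic in typical smooth, connected settings.
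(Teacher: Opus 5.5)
Your proof is correct, and it is the standard direct verification $(U_K^t(\phi_{\lambda_1}\phi_{\lambda_2}))(\mathbf{x})=\phi_{\lambda_1}(\mathbf{S}^t(\mathbf{x}))\,\phi_{\lambda_2}(\mathbf{S}^t(\mathbf{x}))=e^{(\lambda_1+\lambda_2)t}\phi_{\lambda_1}(\mathbf{x})\phi_{\lambda_2}(\mathbf{x})$; the paper states the proposition without proof, and your two extra hypotheses ($\mathcal{F}$ closed under products and $\phi_{\lambda_1}\phi_{\lambda_2}\not\equiv 0$) are real caveats the paper omits. One correction to your aside: nonvanishing of the product is not automatic for merely smooth eigenfunctions on a connected $M$ (under the trivial flow $\mathbf{S}^t=I$, any two bump functions with disjoint supports are eigenfunctions with $\lambda=0$ whose product is zero), so only your real-analytic assumption or the explicit nonzero hypothesis actually secures it.
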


\begin{definition}[Hamiltonian eigenstate]
An eigenstate \(|\psi_n\rangle\in\mathcal{H}\) of \(H\) satisfies
\[
H|\psi_n\rangle = E_n|\psi_n\rangle,\qquad E_n\in\mathbb{R},
\]
and evolves as \(\mathcal{U}^t|\psi_n\rangle = e^{-iE_n t/\hbar}|\psi_n\rangle\).
\end{definition}

\begin{remark}[Formal spectral identification]
Both Koopman and quantum evolutions diagonalize into exponential modes:
\(e^{\lambda t}\) for Koopman eigenfunctions and \(e^{-iEt/\hbar}\) for Hamiltonian eigenstates.
A convenient \emph{formal} identification is
\[
\lambda \;\leftrightarrow\; -\frac{i}{\hbar}E,
\]
which matches the exponential time dependence of individual modes.
\end{remark}

\begin{remark}[Complex Koopman spectra and non-unitary quantum realizations]
Koopman spectra are generally complex, and $\Re(\lambda)\neq 0$ corresponds to growth/decay in $e^{\lambda t}$.
This does not obstruct the present correspondence: such non-unitary evolution can be interpreted as
(i) an effective non-Hermitian generator $e^{-iH_{\mathrm{eff}}t}$ with $H_{\mathrm{eff}}\neq H_{\mathrm{eff}}^\dagger$, or
(ii) an open-system quantum channel whose eigenmodes exhibit exponential decay.

Importantly, non-unitary dynamics can still be realized using unitary quantum circuits through standard enlargements of the Hilbert space.
In particular, our recent experimental work~\cite{li2025realization} demonstrates that non-Hermitian and imaginary-time evolutions can be implemented on
today's hardware by decomposing the target non-unitary propagator via a truncated Cauchy-integral representation into a collection of Hermitian/unitary subtasks, which are executed in parallel and classically aggregated.
Therefore, the mapping between Koopman exponential modes and quantum(-compatible) dynamical generators should be understood as a mode-level correspondence, not restricted to strictly unitary closed-system dynamics.
\end{remark}

\subsection{Representational Mapping from Koopman Features to Quantum States}
\label{subsec:KQI-step3}

The operator-family correspondence motivates a feature-to-state representation: Koopman-derived features (ideal eigenfunctions or data-driven surrogates) are embedded into an \(n\)-qubit Hilbert space for quantum processing.

Let \( \mathcal{G} \subset \mathcal{F} \) be a finite-dimensional invariant subspace spanned by Koopman eigenfunctions \( \{\phi_{\lambda_i}\}_{i=1}^d \). Endow \( \mathcal{G} \) with the inner product
\[
\langle \phi_i, \phi_j \rangle_\mathcal{G} = \int_M \phi_i^*(x) \phi_j(x) \, d\mu(x),
\]
where \(\mu\) is an invariant measure on \(M\). Let \(\mathcal{H}\) be an \(n\)-qubit Hilbert space with \(\dim \mathcal{H} = 2^n \geq d\), equipped with the standard inner product \(\langle \cdot | \cdot \rangle\).

\begin{theorem}[Isometric embedding and mode-preserving realization]
\label{thm:isometric-embedding}
There exists a linear map \( \mathcal{R}: \mathcal{G} \to \mathcal{H} \) satisfying:
\begin{enumerate}
    \item \textbf{Isometry:} \(\langle \phi_i, \phi_j \rangle_\mathcal{G} = \langle \mathcal{R}(\phi_i) | \mathcal{R}(\phi_j) \rangle\) for all \(\phi_i, \phi_j \in \mathcal{G}\).
    \item \textbf{Mode preservation (effective generator):}
    for each Koopman eigenfunction \(U_K^t \phi_{\lambda_i} = e^{\lambda_i t}\phi_{\lambda_i}\), there exists a (possibly non-Hermitian) operator \(H_{\mathrm{eff}}\) acting on \(\mathrm{span}\{\mathcal{R}(\phi_{\lambda_i})\}_{i=1}^d\) such that
    \[
    e^{-iH_{\mathrm{eff}}t/\hbar}\,\mathcal{R}(\phi_{\lambda_i}) = e^{\lambda_i t}\,\mathcal{R}(\phi_{\lambda_i}),\qquad \forall t.
    \]
\end{enumerate}
\end{theorem}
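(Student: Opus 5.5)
The plan is to build $\mathcal{R}$ from an orthonormal basis and then define $H_{\mathrm{eff}}$ directly on the image, so the proof reduces to finite-dimensional linear algebra.

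\textbf{Step 1: an orthonormal basis of $\mathcal{G}$.} Since $\mathcal{G}$ is finite-dimensional with $\dim\mathcal{G}=d$, it is a Hilbert space under $\langle\cdot,\cdot\rangle_\mathcal{G}$. This assumes the $\phi_{\lambda_i}$ lie in $L^2(\mu)$, that $\langle\cdot,\cdot\rangle_\mathcal{G}$ is positive definite on $\mathcal{G}$, and that the spanning eigenfunctions $\{\phi_{\lambda_i}\}_{i=1}^d$ are linearly independent. The last assumption is automatic when the $\lambda_i$ are distinct, since eigenvectors of the commuting family $U_K^t$ with distinct eigenvalues are independent. Apply Gram--Schmidt to $\{\phi_{\lambda_i}\}$ to obtain an orthonormal basis $\{e_1,\dots,e_d\}$ of $\mathcal{G}$.

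\textbf{Step 2: define $\mathcal{R}$.} Because $2^n\ge d$, we may choose $d$ orthonormal computational basis states $|b_1\rangle,\dots,|b_d\rangle$ in $\mathcal{H}$. Define $\mathcal{R}$ linearly by $\mathcal{R}(e_k)=|b_k\rangle$. For $g=\sum_k \alpha_k e_k$ and $h=\sum_k\beta_k e_k$ we get
\[
\langle \mathcal{R}g|\mathcal{R}h\rangle=\sum_k\alpha_k^*\beta_k=\langle g,h\rangle_\mathcal{G}.
\]
This gives item 1 for all pairs in $\mathcal{G}$, in particular for the $\phi_i$. Note that the isometry does not require the $\phi_{\lambda_i}$ to be mutually orthogonal; that only happens in special cases, such as unitary Koopman operators on $L^2(\mu)$ with distinct purely imaginary $\lambda_i$. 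Gram--Schmidt avoids that issue.

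\textbf{Step 3: define $H_{\mathrm{eff}}$.} On $V=\mathrm{span}\{\mathcal{R}(\phi_{\lambda_i})\}=\mathcal{R}(\mathcal{G})$, the vectors $\psi_i:=\mathcal{R}(\phi_{\lambda_i})$ form a (generally non-orthogonal) basis, since $\mathcal{R}$ is injective. Define $H_{\mathrm{eff}}$ on $V$ by
\[
H_{\mathrm{eff}}\psi_i = i\hbar\lambda_i\,\psi_i,
\]
extended linearly. If one wants an operator on all of $\mathcal{H}$, extend by $0$ on $V^\perp$. In matrix form, $H_{\mathrm{eff}}=i\hbar\,S\,\mathrm{diag}(\lambda_i)\,S^{-1}$, where $S$ is the invertible matrix whose columns are the $\psi_i$. $H_{\mathrm{eff}}$ is non-Hermitian in general, both because $\Re\lambda_i\ne0$ and because $S$ need not be unitary; it is Hermitian exactly when the $\lambda_i$ are purely imaginary and the $\psi_i$ for distinct eigenvalues are orthogonal. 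The exponential is defined by the power series, which converges on the finite-dimensional space $V$. Since each $\psi_i$ is an eigenvector of $H_{\mathrm{eff}}$,
\[
e^{-iH_{\mathrm{eff}}t/\hbar}\psi_i=e^{-i(i\hbar\lambda_i)t/\hbar}\psi_i=e^{\lambda_i t}\psi_i ,
\]
which is item 2. This matches the formal identification $\lambda\leftrightarrow -iE/\hbar$ in the preceding Remark.

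As a consistency check, $\mathcal{R}$ intertwines the two evolutions:
\[
e^{-iH_{\mathrm{eff}}t/\hbar}\mathcal{R}=\mathcal{R}\,U_K^t|_\mathcal{G}.
\]
This holds because both sides agree on the eigenbasis, and it uses the invariance of $\mathcal{G}$ under $U_K^t$. It also shows that the semigroup law of Theorem~\ref{thm:KQI-struct} is carried over to $V$.

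\textbf{Main obstacle.} The only delicate step is Step 1: making sure $\langle\cdot,\cdot\rangle_\mathcal{G}$ is a genuine inner product on $\mathcal{G}$ and that the spanning set is a basis. If the $\lambda_i$ repeat, or if some $\phi_{\lambda_i}$ vanishes $\mu$-a.e. (for example when $\Re\lambda_i\neq0$ and $\mu$ is invariant, which forces the eigenfunction to vanish on the support of $\mu$), then item 2 can fail or become vacuous. I would handle this by stating these as standing assumptions, passing to a maximal linearly independent subfamily, and working modulo $\mu$-null functions. The remaining steps are routine finite-dimensional linear algebra.
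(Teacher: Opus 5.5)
Your proof is correct. Steps 1--2 coincide with the paper's: Gram--Schmidt, then map the orthonormal basis to computational basis states. Step 3 is genuinely different, and the difference matters.

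\textbf{What the paper does in Step 3.} It makes $H_{\mathrm{eff}}$ diagonal in the \emph{orthonormalized} basis, $H_{\mathrm{eff}}|i\rangle=i\hbar\lambda_i|i\rangle$ with $|i\rangle=\mathcal{R}(\hat\phi_i)$. It then simply asserts the relation for $\mathcal{R}(\phi_{\lambda_i})$. That transfer silently requires Gram--Schmidt not to mix modes. In general $\mathcal{R}(\phi_{\lambda_i})=\sum_{j\le i}\langle\hat\phi_j,\phi_{\lambda_i}\rangle_{\mathcal{G}}|j\rangle$, which evolves into $\sum_{j\le i}\langle\hat\phi_j,\phi_{\lambda_i}\rangle_{\mathcal{G}}e^{\lambda_j t}|j\rangle$. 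This is not $e^{\lambda_i t}\mathcal{R}(\phi_{\lambda_i})$ as soon as two eigenfunctions with distinct eigenvalues overlap.

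\textbf{What you do instead.} You diagonalize on the possibly non-orthogonal images $\psi_i=\mathcal{R}(\phi_{\lambda_i})$ themselves, via $H_{\mathrm{eff}}=i\hbar S\,\mathrm{diag}(\lambda_i)S^{-1}$. This needs no orthogonality, so it covers exactly the non-normal, ``possibly non-Hermitian'' situation the theorem advertises. The paper's route, when it applies, buys a normal generator diagonal in an orthonormal basis.

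\textbf{A sharper point from your own side remarks.} Your two remarks combine into something stronger than you drew from them. Under the paper's literal hypothesis that $\mu$ is invariant, $U_K^t$ is an isometry of $L^2(\mu)$. Hence every $\phi_{\lambda_i}$ that is nonzero in $L^2(\mu)$ has $\Re\lambda_i=0$. Moreover, $\langle\phi_i,\phi_j\rangle_{\mathcal{G}}=e^{(\lambda_j-\lambda_i)t}\langle\phi_i,\phi_j\rangle_{\mathcal{G}}$ for all $t$, which forces orthogonality whenever $\lambda_i\neq\lambda_j$. So in the invariant-measure setting the paper's shortcut is valid, and both constructions give the same, Hermitian, $H_{\mathrm{eff}}$ on $\mathcal{R}(\mathcal{G})$. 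Your version is what is needed once $\mathcal{G}$ carries a general inner product, for example an empirical, non-invariant one, where growing or decaying modes and non-orthogonality can coexist.

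\textbf{One small correction.} Repeated $\lambda_i$ do not threaten item 2. Your $S\,\mathrm{diag}(\lambda_i)S^{-1}$ only needs the $\psi_i$ to be linearly independent. If some $\phi_{\lambda_k}$ is a combination of an independent subfamily, applying $U_K^t$ shows that only terms with $\lambda_j=\lambda_k$ can occur. So item 2 survives your reduction to a maximal independent subfamily.
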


\begin{proof}
Apply Gram--Schmidt orthogonalization to \(\{\phi_{\lambda_i}\}\) to obtain an orthonormal basis \(\{\hat{\phi}_i\}_{i=1}^d\) of \(\mathcal{G}\).
Define \(\mathcal{R}(\hat{\phi}_i)=|i\rangle\), where \(\{|i\rangle\}_{i=1}^d\) are orthonormal computational basis states in \(\mathcal{H}\), and extend \(\mathcal{R}\) linearly; the isometry property follows immediately.

For mode preservation, define \(H_{\mathrm{eff}}\) on \(\mathrm{span}\{|i\rangle\}_{i=1}^d\) by
\[
H_{\mathrm{eff}}|i\rangle := i\hbar\,\lambda_i\,|i\rangle,
\]
which may be non-Hermitian when \(\lambda_i\in\mathbb{C}\).
Then \(e^{-iH_{\mathrm{eff}}t/\hbar}|i\rangle = e^{\lambda_i t}|i\rangle\), implying the stated relation for \(\mathcal{R}(\phi_{\lambda_i})\).
\end{proof}

\begin{remark}[Compositionality]
The multiplicative closure of Koopman eigenfunctions provides a natural compositional rule on the classical side.
When compositional feature constructions are needed in quantum space, one may employ tensor-product embeddings by allocating additional registers.
We use this as motivation rather than a required axiom of the finite-dimensional embedding.
\end{remark}

This representation mapping provides a principled, structure-preserving embedding of classical dynamical modes into quantum Hilbert space, enabling the quantum circuit to process Koopman-linearized features while preserving their geometric structure.

\subsection{Bridging Theory and Implementation via Residual Observables and Statistical Descriptors}
\label{subsec:KQI-step4}

In our case, we construct a residual-based observable that is structurally aligned with the Koopman--quantum correspondence.

\paragraph{Finite-dimensional Koopman approximation.}
Given a time series \(\mathbf{x}\), we construct reduced Koopman coordinates \(\mathbf{v}(t)\in\mathbb{R}^m\) via delay embedding and SVD/DMD,
and fit a linear model
\[
\dot{\mathbf{v}}(t)\approx A\,\mathbf{v}(t).
\]
Define the residual vector and its magnitude
\begin{equation}
\label{eq:KQI-residual}
\mathbf{r}(t):=\dot{\mathbf{v}}(t)-A\mathbf{v}(t),\qquad R(t):=\|\mathbf{r}(t)\|_2.
\end{equation}
The residual \(R(t)\) quantifies local deviation from Koopman-linear predictability and is sensitive to anomalies/regime changes.

\paragraph{Residual as a linear (and quadratic) observation.}
In discrete time, \(\dot{\mathbf{v}}(t)\) is realized via the second-order central difference scheme
\[
\dot{\mathbf{v}}(t)\approx \frac{\mathbf{v}(t+\Delta t)-\mathbf{v}(t-\Delta t)}{2\Delta t}.
\]
Introduce the augmented state
\[
\tilde{\mathbf{v}}(t):=
\begin{bmatrix}
\mathbf{v}(t+\Delta t)\\ \mathbf{v}(t)\\ \mathbf{v}(t-\Delta t)
\end{bmatrix}\in\mathbb{R}^{3m},
\qquad
B:=
\begin{bmatrix}
\frac{1}{2\Delta t}I_m & -A & -\frac{1}{2\Delta t}I_m
\end{bmatrix}.
\]
Then
\begin{equation}
\label{eq:KQI-linear-r}
\mathbf{r}(t)\approx B\,\tilde{\mathbf{v}}(t),
\end{equation}
and the residual energy admits a quadratic form
\begin{equation}
\label{eq:KQI-energy}
E(t):=\|\mathbf{r}(t)\|_2^2 \approx \tilde{\mathbf{v}}(t)^\dagger (B^\dagger B)\tilde{\mathbf{v}}(t),
\end{equation}
i.e., \(E(t)\) is a positive semidefinite ``physical observation'' extracted from Koopman-consistent linear evolution under appropriate normalization and encoding.

\paragraph{Statistical descriptors as weighted sums of observations.}
Given a windowed sequence \(\{R(t)\}_{t=1}^T\) (or \(\{E(t)\}_{t=1}^T\)), we form a statistical descriptor vector \(\mathbf{z}\in\mathbb{R}^d\),
whose components can be written as empirical aggregations
\begin{equation}
\label{eq:KQI-stats}
z_k = \frac{1}{T}\sum_{t=1}^T w_k\!\big(R(t)\big)
\quad (\text{or } z_k=\frac{1}{T}\sum_{t=1}^T \tilde w_k\!\big(E(t)\big)),
\qquad k=1,\dots,d,
\end{equation}
for suitable weights \(w_k\).
Hence, \(\mathbf{z}\) is a summation of weighted physical observations derived from the Koopman residual.

\subsection{Summary of Proof of Koopman--Quantum Isomorphism}
\label{subsec:KQI-summary}

Therefore, we establish a principled Koopman--quantum correspondence at the level required by our hybrid pipeline by showing that:
(i) Koopman and quantum evolutions share the same time-translation operator-family structure;
(ii) their exponential mode decompositions admit a formal spectral correspondence, extendable to complex spectra via non-unitary quantum realizations;
(iii) finite-dimensional Koopman features admit an isometric embedding into a quantum Hilbert space with a mode-preserving (possibly non-Hermitian) effective generator; and
(iv) a residual-based observable provides a concrete bridge from Koopman approximation to implementable ``physical observations'' and their statistical descriptors.
\section{Data availability}
All data supporting the results of this study are available in the manuscript or the supplementary information. Additional data are available from the corresponding author upon reasonable request.


\section{Acknowledgements}
This work is supported in part by the National Natural Science Foundation of China under Grant No. 62471058,
and in part by the Fund of State Key Laboratory of Information Photonics and Optical Communications, Beijing University of Posts and Telecommunications, China (Grant No. IPOC2022ZT10). We thank the staff members at EAST in Hefei (https://cstr.cn/31130.02.EAST), for providing technical support and assistance in data collection and analysis.

\section{Author contributions}

T.W. and K.L. conceived the project and designed the numerical experiments. T.L., S.Y. and H.L. contributed the EAST experimental data and participated in the scenario analysis and physical interpretation. L.Q. and R.Z. provided useful insights for the
characterization and understanding of the Koopman-quantum hybrid paradigm. T.W. and K.L. wrote the paper, while all authors discussed the results and commented on the manuscript.

\section{Competing interests}
The authors declare no competing interests.

\section{Additional information}
Supplementary information is provided in Appendix A-C. 
Correspondence and requests for materials should be addressed to Keren Li.

\bibliography{main_qnn} 

@article{creely2020overview,
  title={Overview of the SPARC tokamak},
  author={Creely, AJ and Greenwald, Martin J and Ballinger, Sean B and Brunner, D and Canik, J and Doody, Jeffrey and F{\"u}l{\"o}p, T and Garnier, DT and Granetz, R and Gray, TK and others},
  journal={Journal of Plasma Physics},
  volume={86},
  number={5},
  pages={865860502},
  year={2020},
  doi={10.1017/S0022377820001257},
  publisher={Cambridge University Press}
}

@article{song2014concept,
  title={Concept design of CFETR tokamak machine},
  author={Song, Yun Tao and Wu, Song Tao and Li, Jian Gang and Wan, Bao Nian and Wan, Yuan Xi and Fu, Peng and Ye, Min You and Zheng, Jin Xing and Lu, Kun and Gao, Xianggao and others},
  journal={IEEE Transactions on Plasma Science},
  volume={42},
  number={3},
  pages={503--509},
  year={2014},
  doi={10.1109/TPS.2014.2299277},
  publisher={IEEE}
}

@article{kube2022near,
  title={Near real-time streaming analysis of big fusion data},
  author={Kube, Ralph and Churchill, R Michael and Chang, CS and Choi, Jong and Wang, R and Klasky, Scott and Stephey, Laurie and Dart, E and Choi, MJ},
  journal={Plasma Physics and Controlled Fusion},
  volume={64},
  number={3},
  pages={035015},
  year={2022},
  doi={10.1088/1361-6587/ac3f42},
  publisher={IOP Publishing}
}

@article{Li2018An,
doi = {10.1088/1748-0221/13/10/P10029},
url = {https://dx.doi.org/10.1088/1748-0221/13/10/P10029},
year = {2018},
month = {oct},
publisher = {},
volume = {13},
number = {10},
pages = {P10029},
author = {Li, C. and Lan, T. and Wang, Y. and Liu, J. and Xie, J. and Lan, T. and Li, H. and Qin, H.},
title = {An automatic data cleaning procedure for electron cyclotron emission imaging on EAST tokamak using machine learning algorithm},
journal = {Journal of Instrumentation}
}

@article{Ferreira2018JETTomography,
title = {Deep learning for plasma tomography using the bolometer system at JET},
journal = {Fusion Engineering and Design},
volume = {114},
pages = {18-25},
year = {2017},
issn = {0920-3796},
doi = {https://doi.org/10.1016/j.fusengdes.2016.11.006},
url = {https://www.sciencedirect.com/science/article/pii/S0920379616306883},
author = {Francisco A. Matos and Diogo R. Ferreira and Pedro J. Carvalho}
}

@article{Guo2021EASTLSTM,
doi = {10.1088/1361-6587/ac228b},
url = {https://dx.doi.org/10.1088/1361-6587/ac228b},
year = {2021},
month = {sep},
publisher = {IOP Publishing},
volume = {63},
number = {11},
pages = {115007},
author = {Guo, B H and Chen, D L and Shen, B and Rea, C and Granetz, R S and Zeng, L and Hu, W H and Qian, J P and Sun, Y W and Xiao, B J},
title = {Disruption prediction on EAST tokamak using a deep learning algorithm},
journal = {Plasma Physics and Controlled Fusion}
}

@article{wei2023quantum,
title = {Quantum machine learning in medical image analysis: A survey},
journal = {Neurocomputing},
volume = {525},
pages = {42-53},
year = {2023},
issn = {0925-2312},
doi = {https://doi.org/10.1016/j.neucom.2023.01.049},
url = {https://www.sciencedirect.com/science/article/pii/S0925231223000589},
author = {Lin Wei and Haowen Liu and Jing Xu and Lei Shi and Zheng Shan and Bo Zhao and Yufei Gao}
}

@ARTICLE{ullah2024quantum,
  author={Ullah, Ubaid and Garcia-Zapirain, Begonya},
  journal={IEEE Access}, 
  title={Quantum Machine Learning Revolution in Healthcare: A Systematic Review of Emerging Perspectives and Applications}, 
  year={2024},
  volume={12},
  number={},
  pages={11423-11450},
  doi={10.1109/ACCESS.2024.3353461}}

@article{LAN2019159,
title = {Time-domain global similarity method for automatic data cleaning for multi-channel measurement systems in magnetic confinement fusion devices},
journal = {Computer Physics Communications},
volume = {234},
pages = {159-166},
year = {2019},
issn = {0010-4655},
doi = {https://doi.org/10.1016/j.cpc.2018.07.014},
url = {https://www.sciencedirect.com/science/article/pii/S0010465518302674},
author = {Ting Lan and Jian Liu and Hong Qin and Lin Li Xu}
}

@misc{brunton2021modern,
      title={Modern Koopman Theory for Dynamical Systems}, 
      author={Steven L. Brunton and Marko Budišić and Eurika Kaiser and J. Nathan Kutz},
      year={2021},
      eprint={2102.12086},
      archivePrefix={arXiv},
      primaryClass={math.DS},
      url={https://arxiv.org/abs/2102.12086}, 
}

@article{mezic2013analysis,
  title={Analysis of fluid flows via spectral properties of the Koopman operator},
  author={Mezi{\'c}, Igor},
  journal={Annual review of fluid mechanics},
  volume={45},
  number={1},
  pages={357--378},
  year={2013},
  doi={https://doi.org/10.1146/annurev-fluid-011212-140652},
  publisher={Annual Reviews}
}

@article{zhang2019dynamics,
  title={Dynamics reconstruction and classification via Koopman features},
  author={Zhang, Wei and Yu, Yao-Chi and Li, Jr-Shin},
  journal={Data mining and knowledge discovery},
  volume={33},
  number={6},
  pages={1710--1735},
  year={2019},
  doi={10.1007/s10618-019-00639-x},
  publisher={Springer}
}

@article{biamonte2017quantum,
  title={Quantum machine learning},
  author={Biamonte, Jacob and Wittek, Peter and Pancotti, Nicola and Rebentrost, Patrick and Wiebe, Nathan and Lloyd, Seth},
  journal={Nature},
  volume={549},
  number={7671},
  pages={195--202},
  year={2017},
  doi={https://doi.org/10.1038/nature23474},
  publisher={Nature Publishing Group UK London}
}

@article{benedetti2019parameterized,
  title={Parameterized quantum circuits as machine learning models},
  author={Benedetti, Marcello and Lloyd, Erika and Sack, Stefan and Fiorentini, Mattia},
  journal={Quantum science and technology},
  volume={4},
  number={4},
  pages={043001},
  year={2019},
  doi = {10.1088/2058-9565/ab4eb5},
  publisher={IOP Publishing}
}

@article{cerezo2022challenges,
  title={Challenges and opportunities in quantum machine learning},
  author={Cerezo, Marco and Verdon, Guillaume and Huang, Hsin-Yuan and Cincio, Lukasz and Coles, Patrick J},
  journal={Nature computational science},
  volume={2},
  number={9},
  pages={567--576},
  year={2022},
  doi={https://doi.org/10.1038/s43588-022-00311-3},
  publisher={Nature Publishing Group US New York}
}

@misc{li2024learning,
      title={Learning Parameterized Quantum Circuits with Quantum Gradient}, 
      author={Keren Li and Yuanfeng Wang and Pan Gao and Shenggen Zheng},
      year={2024},
      eprint={2409.20044},
      archivePrefix={arXiv},
      primaryClass={quant-ph},
      url={https://arxiv.org/abs/2409.20044}, 
}

@article{mcclean2018barren,
  title={Barren plateaus in quantum neural network training landscapes},
  author={McClean, Jarrod R and Boixo, Sergio and Smelyanskiy, Vadim N and Babbush, Ryan and Neven, Hartmut},
  journal={Nature communications},
  volume={9},
  number={1},
  pages={4812},
  year={2018},
  doi={https://doi.org/10.1038/s41467-018-07090-4},
  publisher={Nature Publishing Group UK London}
}

@article{liu2018quantum,
  title = {Quantum machine learning for quantum anomaly detection},
  author = {Liu, Nana and Rebentrost, Patrick},
  journal = {Phys. Rev. A},
  volume = {97},
  issue = {4},
  pages = {042315},
  numpages = {10},
  year = {2018},
  month = {Apr},
  publisher = {American Physical Society},
  doi = {10.1103/PhysRevA.97.042315},
  url = {https://link.aps.org/doi/10.1103/PhysRevA.97.042315}
}

@article{huang2021power,
  title={Power of data in quantum machine learning},
  author={Huang, Hsin-Yuan and Broughton, Michael and Mohseni, Masoud and Babbush, Ryan and Boixo, Sergio and Neven, Hartmut and McClean, Jarrod R},
  journal={Nature communications},
  volume={12},
  number={1},
  pages={2631},
  year={2021},
  doi={https://doi.org/10.1038/s41467-021-22539-9},
  publisher={Nature Publishing Group UK London}
}

@article{stilck2021limitations,
  title={Limitations of optimization algorithms on noisy quantum devices},
  author={Stilck Fran{\c{c}}a, Daniel and Garcia-Patron, Raul},
  journal={Nature Physics},
  volume={17},
  number={11},
  pages={1221--1227},
  year={2021},
  doi={https://doi.org/10.1038/s41567-021-01356-3},
  publisher={Nature Publishing Group UK London}
}

@book{mauroy2020koopman,
  title={The Koopman Operator in Systems and Control: Concepts, Methodologies, and Applications},
  author={Mauroy, A. and Mezi{\'c}, I. and Susuki, Y.},
  isbn={9783030357139},
  series={Lecture Notes in Control and Information Sciences},
  year={2020},
  address={Berlin/Heidelberg, Germany},
  publisher={Springer}
}

@article{schuld2021effect,
  title = {Effect of data encoding on the expressive power of variational quantum-machine-learning models},
  author = {Schuld, Maria and Sweke, Ryan and Meyer, Johannes Jakob},
  journal = {Phys. Rev. A},
  volume = {103},
  issue = {3},
  pages = {032430},
  numpages = {12},
  year = {2021},
  month = {Mar},
  publisher = {American Physical Society},
  doi = {10.1103/PhysRevA.103.032430},
  url = {https://link.aps.org/doi/10.1103/PhysRevA.103.032430}
}

@article{kates2019predicting,
  title={Predicting disruptive instabilities in controlled fusion plasmas through deep learning},
  author={Kates-Harbeck, Julian and Svyatkovskiy, Alexey and Tang, William},
  journal={Nature},
  volume={568},
  number={7753},
  pages={526--531},
  year={2019},
  doi          = {10.1038/s41586-019-1116-4},
  publisher={Nature Publishing Group UK London}
}

@article{karniadakis2021physicsinformed,
  author = {Karniadakis, George Em and Kevrekidis, Ioannis G. and Lu, Lu and Perdikaris, Paris and Wang, Sifan and Yang, Liu},
  doi = {10.1038/s42254-021-00314-5},
  issn = {25225820},
  journal = {Nature Reviews Physics},
  number = 6,
  pages = {422--440},
  refid = {Karniadakis2021},
  title = {Physics-informed machine learning},
  url = {https://doi.org/10.1038/s42254-021-00314-5},
  volume = 3,
  year = 2021
}

@article{Huang2022Quantum,
   title={Quantum advantage in learning from experiments},
   volume={376},
   ISSN={1095-9203},
   url={http://dx.doi.org/10.1126/science.abn7293},
   DOI={10.1126/science.abn7293},
   number={6598},
   journal={Science},
   publisher={American Association for the Advancement of Science (AAAS)},
   author={Huang, Hsin-Yuan and Broughton, Michael and Cotler, Jordan and Chen, Sitan and Li, Jerry and Mohseni, Masoud and Neven, Hartmut and Babbush, Ryan and Kueng, Richard and Preskill, John and McClean, Jarrod R.},
   year={2022},
   month=jun, pages={1182–1186} }

@article{Bharti2022Noisy,
  title = {Noisy intermediate-scale quantum algorithms},
  author = {Bharti, Kishor and Cervera-Lierta, Alba and Kyaw, Thi Ha and Haug, Tobias and Alperin-Lea, Sumner and Anand, Abhinav and Degroote, Matthias and Heimonen, Hermanni and Kottmann, Jakob S. and Menke, Tim and Mok, Wai-Keong and Sim, Sukin and Kwek, Leong-Chuan and Aspuru-Guzik, Al\'an},
  journal = {Rev. Mod. Phys.},
  volume = {94},
  issue = {1},
  pages = {015004},
  numpages = {69},
  year = {2022},
  month = {Feb},
  publisher = {American Physical Society},
  doi = {10.1103/RevModPhys.94.015004},
  url = {https://link.aps.org/doi/10.1103/RevModPhys.94.015004}
}

@article{Cerezo2021Variational,
   title={Variational quantum algorithms},
   volume={3},
   ISSN={2522-5820},
   url={http://dx.doi.org/10.1038/s42254-021-00348-9},
   DOI={10.1038/s42254-021-00348-9},
   number={9},
   journal={Nature Reviews Physics},
   publisher={Springer Science and Business Media LLC},
   author={Cerezo, M. and Arrasmith, Andrew and Babbush, Ryan and Benjamin, Simon C. and Endo, Suguru and Fujii, Keisuke and McClean, Jarrod R. and Mitarai, Kosuke and Yuan, Xiao and Cincio, Lukasz and Coles, Patrick J.},
   year={2021},
   month=aug, pages={625–644} }

@article{Lubasch2020Variational,
  title = {Variational quantum algorithms for nonlinear problems},
  author = {Lubasch, Michael and Joo, Jaewoo and Moinier, Pierre and Kiffner, Martin and Jaksch, Dieter},
  journal = {Phys. Rev. A},
  volume = {101},
  issue = {1},
  pages = {010301},
  numpages = {7},
  year = {2020},
  month = {Jan},
  publisher = {American Physical Society},
  doi = {10.1103/PhysRevA.101.010301},
  url = {https://link.aps.org/doi/10.1103/PhysRevA.101.010301}
}

@article{Herman2023Quantum,
   title={Quantum computing for finance},
   volume={5},
   ISSN={2522-5820},
   url={http://dx.doi.org/10.1038/s42254-023-00603-1},
   DOI={10.1038/s42254-023-00603-1},
   number={8},
   journal={Nature Reviews Physics},
   publisher={Springer Science and Business Media LLC},
   author={Herman, Dylan and Googin, Cody and Liu, Xiaoyuan and Sun, Yue and Galda, Alexey and Safro, Ilya and Pistoia, Marco and Alexeev, Yuri},
   year={2023},
   month=jul, pages={450–465} }

@article{Xie2025Neural,
doi = {10.1088/1361-6587/adba12},
url = {https://doi.org/10.1088/1361-6587/adba12},
year = {2025},
month = {mar},
publisher = {IOP Publishing},
volume = {67},
number = {4},
pages = {045001},
author = {Xie, Xiaoping and Lan, Ting and Liu, Haiqing and Zhu, Xiang and Mao, Wenzhe and Lan, Tao and Ding, Weixing},
title = {Neural-network based electron density profile inversion for interferometer on EAST tokamak},
journal = {Plasma Physics and Controlled Fusion}
}

@ARTICLE{Li2022Convolutional,
  author={Li, Zewen and Liu, Fan and Yang, Wenjie and Peng, Shouheng and Zhou, Jun},
  journal={IEEE Transactions on Neural Networks and Learning Systems}, 
  title={A Survey of Convolutional Neural Networks: Analysis, Applications, and Prospects}, 
  year={2022},
  volume={33},
  number={12},
  pages={6999-7019},
  doi={10.1109/TNNLS.2021.3084827}}

@article{Laurens2008Visualizing,
  author  = {Laurens van der Maaten and Geoffrey Hinton},
  title   = {Visualizing Data using t-SNE},
  journal = {Journal of Machine Learning Research},
  year    = {2008},
  volume  = {9},
  number  = {86},
  pages   = {2579--2605},
  url     = {http://jmlr.org/papers/v9/vandermaaten08a.html}
}

@article{Rousseeuw1987Silhouettes,
title = {Silhouettes: A graphical aid to the interpretation and validation of cluster analysis},
journal = {Journal of Computational and Applied Mathematics},
volume = {20},
pages = {53-65},
year = {1987},
issn = {0377-0427},
doi = {https://doi.org/10.1016/0377-0427(87)90125-7},
url = {https://www.sciencedirect.com/science/article/pii/0377042787901257},
author = {Peter J. Rousseeuw}
}

@INPROCEEDINGS{Furutanpey2023Architectural,
  author={Furutanpey, Alireza and Barzen, Johanna and Bechtold, Marvin and Dustdar, Schahram and Leymann, Frank and Raith, Philipp and Truger, Felix},
  booktitle={2023 IEEE International Conference on Quantum Software (QSW)}, 
  title={Architectural Vision for Quantum Computing in the Edge-Cloud Continuum}, 
  year={2023},
  volume={},
  number={},
  pages={88-103},
  doi={10.1109/QSW59989.2023.00021}}

@misc{farhi2018classification,
      title={Classification with Quantum Neural Networks on Near Term Processors}, 
      author={Edward Farhi and Hartmut Neven},
      year={2018},
      eprint={1802.06002},
      archivePrefix={arXiv},
      primaryClass={quant-ph},
      url={https://arxiv.org/abs/1802.06002}, 
}

@article{Schuld2015introduction,
author = {Maria Schuld and Ilya Sinayskiy and Francesco Petruccione},
title = {An introduction to quantum machine learning},
journal = {Contemporary Physics},
volume = {56},
number = {2},
pages = {172--185},
year = {2015},
publisher = {Taylor \& Francis},
doi = {10.1080/00107514.2014.964942},
URL = {https://doi.org/10.1080/00107514.2014.964942}
}

@misc{li2025realization,
      title={Realization of Thread Level Parallelism on Quantum Devices}, 
      author={Keren Li and Zidong Lin and Zheng An and Guanru Feng and Zipeng Wu and Shiyao Hou and Jingen Xiang},
      year={2025},
      eprint={2511.05436},
      archivePrefix={arXiv},
      primaryClass={quant-ph},
      url={https://arxiv.org/abs/2511.05436}, 
}

\newpage 
\begin{appendices}

\setcounter{figure}{0}
\renewcommand\thefigure{S\arabic{figure}}
\setcounter{table}{0}
\renewcommand\thetable{S\arabic{table}}

\section{Details for Data Processing and Feature Extraction Details}
\label{app:data}
The data processing pipeline is implemented in three sequential stages: raw signal standardization, Koopman-based linearization, and statistical feature extraction. The detailed protocols for each stage are described below.

\subsection{Raw Data Preprocessing and Labeling}

The input data consists of unstructured JSON files from the POINT diagnostic system. Each file contains time-series data for $11$ spatial channels (every discharge) and a time vector. First, we align the temporal axis by identifying the index corresponding to the plasma discharge onset ($t=0$). All pre-trigger samples ($t<0$) are discarded to focus on the active plasma phase. Ground truth are labeled manually. Channels listed in this field are labeled as $y=0$ (anomaly), while remaining channels are labeled as $y=1$ (normal). For the raw data baseline, we apply Z-score standardization to each valid channel sequence $\mathbf{x}(t)$. The standardized sequence $\mathbf{x}_{norm}(t)$ is calculated by subtracting the mean $\mu$ and dividing by the standard deviation $\sigma$, with a small epsilon term $\epsilon=10^{-6}$ added to the denominator to prevent division by zero in cases of flat signals. This ensures that the input to the deep learning models follows a standard normal distribution.

\subsection{Koopman Sequence and its Mathematical Formulation}
\label{app:koopman_math}
Koopman operator theory is employed to linearize the non-linear dynamics of fusion plasma diagnostics. The underlying hypothesis is that while the state space evolution of the plasma is highly non-linear, there exists an infinite-dimensional Hilbert space of observables where the evolution is linear. We approximate this operator using a data-driven approach akin to dynamic mode decomposition, focusing on the reconstruction error as a metric for anomaly detection.

\subsubsection*{Time-Delay Embedding and Hankel Matrix Construction}
Given a single-channel discrete time-series measurement $\mathbf{x} = [x_0, x_1, \dots, x_{T-1}]$ sampled at equivalent time steps, we first reconstruct the phase space geometry using time-delay embedding. This transforms the scalar observation into a high-dimensional state vector. We construct a Hankel matrix $H_m$ by stacking time-shifted copies of the data. For an embedding dimension $d_H$ and delay $\tau_H$, the matrix structure is defined as follows:
\begin{eqnarray}
H_m = \begin{bmatrix}
x_1 & x_{1+\tau_H} & \dots & x_{1+(d_H-1)\tau_H} \\
x_2 & x_{2+\tau_H} & \dots & x_{2+(d_H-1)\tau_H} \\
\vdots & \vdots & \ddots & \vdots \\
x_{k_H} & x_{k_H+\tau_H} & \dots & x_{k_H+(d_H-1)\tau_H}
\end{bmatrix}
\end{eqnarray}
where $k_H$ is the number of resulting delay vectors. This embedding ensures that the latent dynamical information hidden in the scalar signal is unfolded into a matrix form suitable for spectral analysis.

\subsubsection*{Singular Value Decomposition and Rank Truncation}
To extract the dominant coherent structures and filter out incoherent noise, we perform a Singular Value Decomposition on $H_m$. The decomposition is given by $H_m = U\Sigma V^*$, where $U$ and $V$ contain the left and right singular vectors, and $\Sigma$ is a diagonal matrix containing the singular values $\sigma_i$ arranged in descending order.

To obtain a low-rank approximation of the system dynamics, we truncate the decomposition to the first $r$ singular values. In our implementation, we set $r=11$, which empirically captures the primary energy modes of the plasma evolution while discarding high-frequency stochastic fluctuations. The reduced state coordinates $\mathbf{v}(t)$ are then obtained by projecting the system onto the dominant singular vectors:
\begin{equation}
\mathbf{v} = \Sigma_r V_r^*
\end{equation}
where $\Sigma_r$ and $V_r$ correspond to the truncated matrices. This step effectively maps the high-dimensional noisy input into a smooth, low-dimensional manifold governed by the dominant Koopman modes.

\subsubsection*{Linear Operator Approximation and Residual Calculation}
The core of the Koopman analysis lies in finding a linear operator $A$ that best describes the temporal evolution of the reduced state $\mathbf{v}(t)$. Unlike discrete-time maps, we adopt a continuous-time differential perspective. We first estimate the time derivative of the state vectors, $\dot{\mathbf{v}}(t)$, using a second-order central difference scheme.

We assume that within a local temporal window, the dynamics satisfy the linear differential equation $\dot{\mathbf{v}} \approx A \mathbf{v}$. The optimal linear operator $A$ is solved via a least-squares minimization problem:
\begin{equation}
A = \operatorname{argmin}_{A} \| \dot{\mathbf{v}} - A \bm{\mathbf{v}} \|_F
\end{equation}
The analytical solution is computed using the Moore-Penrose pseudoinverse, $A \approx \dot{\mathbf{v}} \mathbf{v}^\dagger$.

The Koopman reconstruction residual $R(t)$ is defined as the norm of the difference between the observed derivative and the linear prediction:
\begin{equation}
R(t) = \| \dot{\mathbf{v}}(t) - A \mathbf{v}(t) \|_2
\end{equation}
A low residual indicates that the plasma is in a coherent, stable regime well-described by linear Koopman dynamics. Conversely, a high residual signifies strong non-linearities. 

\subsection{Statistical Feature Extraction}
For the compact quantum classifier, we further compress the Koopman sequence. Since the length of the residual sequence varies with the discharge duration, we calculate the statistics that capture the distribution of the dynamical errors. The extracted features are for example, the mean, standard deviation, minimum value, maximum value, skewness, and kurtosis. These descriptors form the input vector $\mathbf{z}$ for PQNN. This statistical summarization bridges the dimensionality gap between the high-frequency diagnostic data and the limited qubit capacity of current quantum processors, while preserving the essential information regarding signal stability and anomaly distribution.

\section{Structure and Implementation of PQNN}
\label{app:PQNN}
The PQNN model integrates a classical projection layer with parallel PQCs to perform end-to-end classification on the features from Koopman sequence. In the forward pass, the input vector \(\mathbf{z}\) is first normalized via a layer normalization operation and then mapped by a linear transformation without bias:
\begin{equation}
\mathbf{h} = W_{enc} \, \mathrm{LayerNorm}(\mathbf{z})+\mathbf{b},
\end{equation}
where \(W_{enc}\) denotes the weight matrix which is set fixed. The resulting feature vector \(\mathbf{h}\) is then split into a set of equal-length subvectors \(\mathbf{h}^{(1)}, \mathbf{h}^{(2)} \dots \). Each sub-vector is processed by an independent PQC.

Specifically, the sub-vector is angle-encoded via single-qubit Pauli-Y rotations on qubits initialized in the zero state. This is supposed to be followed by a PQC with entangling CNOT operations and parameterized single-qubit rotations. 
The quantum outputs are subjected to a second layer normalization to stabilize the distribution before measurement. Pauli-Z measurements on each qubit yield a expectation vector \(\mathbf{e}^{(j)}\) for the \(j\)-th sub-circuit. 
These features are aggregated into raw logits by averaging alternate elements to form the final 2-dimensional output.

During training, the model optimizes the cross-entropy loss.
In our simulation, dimension of feature is set as \(6\) and \(2\) parallel \(3\)-qubit PQCs are used. 
In our work, we choose a full parameterized 3-qubit unitary for simulation, where 64 parameters are fine-tuned. Optimization is performed using the AdamW optimizer with a learning rate of 0.01 for 120 epochs and a batch size of \(64\).

\section{Details for Visualization Methodology}
\label{app:visualization}

To evaluate the decision boundary quality and verify the topological distinctness of the learned features, we implemented a comprehensive visualization pipeline using the t-Distributed Stochastic Neighbor Embedding (t-SNE) algorithm. The feature extraction and dimensionality reduction protocols are described below.

\subsection*{Data Preprocessing and Feature Extraction}
The visualization inputs were categorized into three distinct levels of abstraction, each requiring specific preprocessing to ensure computational feasibility and numerical stability:
\begin{enumerate}[label=\arabic*., nosep, leftmargin=*, listparindent=1em]
\item Raw Data Space: Directly visualizing the high-frequency diagnostic waveforms ($T \approx 20,000$) poses significant computational challenges for manifold learning algorithms. To mitigate this, we applied a temporal downsampling strategy, reducing the effective sequence length while preserving the macroscopic pulse structure. Variable-length sequences were zero-padded to align with the maximum duration in the test batch, resulting in a flattened input vector $\mathbf{x}_{raw} \in \mathbb{R}^{T}$ for the t-SNE engine.
\item Koopman Sequence Space: This visualization represents the physics-informed intermediate space. Instead of using raw noisy data, we utilized the Koopman operator to reconstruct the signal trajectories based on the dominant spectral modes. This process effectively acts as a spectral filter, removing high-frequency stochastic noise while preserving the coherent plasma dynamics. The resulting reconstructed time-series vectors $\mathbf{x}_{\text{seq}} \in \mathbb{R}^{T}$ were used as the input for this projection, explaining the improved local clustering (S-Score 0.458) compared to raw data.
\item Latent Representation Spaces: For the neural network models, we extracted the latent representations from the penultimate layer, defined as the vector space immediately preceding the final class probability assignment.
For the Koopman-PQNN, specifically, we visualized the post-quantum aggregated features. This corresponds to the 2-dimensional expectation vector obtained after the variational quantum circuit and the subsequent layer normalization, but prior to the final softmax operation. This specific extraction point reflects the actual non-linear manifold constructed by the quantum ansatz.
For the classical CNN baselines, we extracted the high-dimensional flattened output of the final convolutional block to visualize the dense feature clustering.
\end{enumerate}

\subsection*{t-SNE Configuration and Quantitative Metrics}
To generate the 2D projections, we utilized the t-SNE implementation from \texttt{scikit-learn} with a specific hyperparameter configuration optimized for cluster separation:

\begin{enumerate}[label=\arabic*., nosep, leftmargin=*, listparindent=1em]
    \item Initialization: We employed Principal Component Analysis  initialization rather than random initialization. This preserves the global topological structure of the data during the non-convex optimization process, ensuring that the relative positioning of clusters is meaningful.
    \item Perplexity: The perplexity parameter was set to $50.0$. This relatively high value (compared to the default 30) forces the algorithm to consider a larger number of nearest neighbors, which is crucial for revealing the global separation between the "Stable" and "anomaly" manifolds rather than focusing solely on local variations.
    \item Early Exaggeration: We increased the early exaggeration factor to $20.0$. This amplifies the attractive forces between natural clusters during the early phase of optimization, creating tighter and more distinct visual groupings.
    \item Optimization: The algorithm was run for a maximum of 1,000 iterations using the Euclidean distance metric. The learning rate was set to 'auto' to ensure adaptive convergence based on the data density.
\end{enumerate}
Finally, to quantify the visual clustering quality observed in the t-SNE projections, we calculated the S-Score  for each resulting 2D embedding. The S-Score measures the ratio of intra-cluster distance to inter-cluster distance, providing a metric $S \in [-1, 1]$ where high values confirm that the model has successfully disentangled the complex input correlations into linearly separable regions.

\end{appendices}
\end{document}